\documentclass[preprint,11pt]{article}
\usepackage{amsmath}
\usepackage{graphicx}
\usepackage{graphicx, subfigure}
\usepackage{epsfig}
\usepackage[utf8]{inputenc}
\usepackage{floatrow}
\usepackage{float}
\usepackage{amssymb}
\usepackage{amsthm}
\usepackage{epstopdf}
\usepackage{anysize}
\usepackage{authblk}
\usepackage{xpatch}
\usepackage{xcolor}

\usepackage{hyperref}

\marginsize{3cm}{2cm}{2cm}{2cm}

\begin{document}

\title{\bf Immunomodulatory role of black tea in the mitigation of cancer induced by inorganic arsenic}

\author[1]{\small Ravi Kiran\footnote {ravieroy123@gmail.com, ravieroy123@iitkgp.ac.in}}
\author[2]{\small Swati Tyagi}
\author[3]{\small Syed Abbas\footnote {abbas@iitmandi.ac.in}}
\author[4]{\small Madhumita Roy}
\author[1,5]{\small A. Taraphder}

\affil[1]{\footnotesize \textit{Centre for Theoretical Studies,Indian Institute of Technology Kharagpur, Kharagpur 721302, India}}
\affil[2]{\footnotesize \textit{Department of Applied Sciences, Punjab Engineering College, Chandigarh-160012, India}}
\affil[3]{\footnotesize \textit{School of Basic Sciences, IIT Mandi HP 175005} }
\affil[4]{\footnotesize \textit{Department of Environmental Carcinogenesis and Toxicology, Chittaranjan National Cancer Institute, 37 S. P. Mukherjee	Road, Kolkata 700026, India}}
\affil[5]{\footnotesize \textit{Department of Physics, Indian Institute of Technology Kharagpur, Kharagpur 721302, India}}

\vskip .1in
\maketitle
\begin{abstract}
	We present a model analysis of the tumor and normal cell growth under the influence of a carcinogenic agent, an immunomdulator (IM) and variable influx of immune cells including relevant interactions. The tumor growth is facilitated by carcinogens such as inorganic arsenic while the IM considered here is black tea (Camellia sinesnsis). The model with variable influx of immune cells is observed to have considerable advantage over the constant influx model, and while the tumor cell population is greatly mitigated, normal cell population remains above healthy levels. The evolutions of normal and tumor cells are computed from the proposed model and their local stabilities are investigated analytically. Numerical simulations are performed to study the long term dynamics and an estimation of the effects of various factors is made.  This helps in developing a balanced strategy for tumor mitigation without the use of chemotherapeutic drugs that usually have strong side-effects. 
	
\end{abstract}

{\bf Keywords}: Tumor growth dynamics; Immune response; Mathematical model; Next generation matrix; Basic reproduction number; Stability. \\

{\bf AMS subject classification}: 93A30; 93D20; 37B25; 37N25.\\

%\numberwithin{equation}{section}
\newtheorem{theorem}{Theorem}[section]
\newtheorem{lemma}[theorem]{Lemma}
\newtheorem{example}[theorem]{Example}
\newtheorem{proposition}[theorem]{Proposition}
\newtheorem{corollary}[theorem]{Corollary}
\newtheorem{remark}[theorem]{Remark}
\newtheorem{definition}[theorem]{Definition}
\newtheorem{result}[theorem]{Result}
%\newtheorem{proof}[theorem]{Proof}

%----------Introduction-------------

\section{Introduction}

The scourge of cancer is on the rise all around the globe. There have been a number of factors responsible for this increasing trend. Cancer is one of the leading causes of death and according to 2018 GLOBOCAN data, death due to cancer stands at 9.6 million in 2018 \cite{bray2018global}. Cancer is an uncontrolled and abnormal proliferation of cells, leading to formation of tumour which can infiltrate and destroy normal tissues. There have a number of etiologic factors responsible for this disease, including environmental exposures, lifestyle and others. Often there is some occupational exposure to certain carcinogenic risk-factors like chemicals, radioactive materials etc.  

Arsenic, a metalloid, is found in abundance in ground water in many parts of the world. Exposure to arsenic leads to a plethora of health hazards \cite{sinha2010antioxidant}. The United States' Environmental Protection Agency declared that all arsenic is a potential risk to human health \cite{Dibyendu2007arsenic}. In the list of hazardous substances, the United States' Agency for Toxic Substances and Disease Registry ranked arsenic as number 1 \cite{Carelton2007arsenic}. Ingestion of arsenic through drinking water is a global catastrophe and millions of people are affected, particularly in the under-developed world. 

Of the two forms of arsenic, namely organic and inorganic, inorganic Arsenic (iAs) is a potential cause of cancer. There are two inorganic forms of iAs and they are more toxic than the organic forms. These inorganic forms are capable of generation of reactive oxygen species (ROS), leading to DNA, protein and lipid damage. The major cause of arsenic toxicity in humans is through consumption of iAs-contaminated water \cite{mazumder2000diagnosis}. Contamination of ground water with this silent poison is a global problem. 

Administration of cytotoxic drugs, as in chemotherapy, destroys the cancer cells, or slows down their rapid division. Besides killing the cancer cells, these cytotoxic drugs also damage normal cells. Chemotherapeutic agents often lower the immune response, therefore, a combination of chemotherapy with an immune stimulatory agent might be a promising regimen to treat cancer. An immunomodulator may pave the way to a successful treatment strategy. Any agent which boosts immunity with minimal side effects is highly desirable. Natural plant biomolecules are endowed with a number of health-beneficial properties. Many of the natural plant products have enormous anticancer potential. Chemoprevention is therefore a promising strategy that can retard, revert or prevent the progression of cancer and block the development at the initiation stage by use of natural products. 

iAs is a known inducer of ROS, which in turn triggers a number of events promoting carcinogenesis. Therefore, an antioxidant may aid in tackling it. Tea, the most popular beverage around the globe is a good antioxidant. Therefore, studies were undertaken to mitigate carcinogenesis by employing black tea (BT). Among a number of cancers, exposure to iAs causes skin carcinoma. The fact that BT quenches ROS in Swiss albino mice and inhibits iAs-induced skin cancer is already known \cite{sinha2010antioxidant} from experiments carried out on Swiss albino male mice (Mus Musculus) in the laboratory of one of us. A mathematical model was also developed \cite{srivastava2020growth} based on the experimental data.

Besides its antioxidant potential, BT has several other attributes, one such is its immunomodulatory role. Immunotherapy is yet another type of treatment given to patients, which aims to reinforce patients’ own immune response against the growth of cancer cells. The IM regulates the immune function of the normal physiological process and also tries to maintain a healthy immune response in altered physiological conditions. Black tea has been recognized as an immunomodulator in Ayurvedic medicine. The immunomodulatory effects of tea has also been observed in mice \cite{haque2014immunostimulatory, gomes2014black, chattopadhyay2012black}. In this paper, we build upon the previous experimental work and modelling on the effects of BT on tumor growth\cite{srivastava2020growth}. A point of departure from earlier models \cite{de2001mathematical, de2003dynamics} is the inclusion of variable influx of immune cells facilitated primarily by the immunomodulatory effects of BT. Along with the variable influx, the positive interaction between tea and immune cells is also included.

%---------------The Model

\section{The Model}

In this section we describe the model in detail. As discussed above, we do not use chemotherapy to destroy the cancer cells and instead include the effects of BT in reducing tumor growth (albeit through the reduction of ROS) and the modulation of immunity. 

\subsection{The Model \textemdash Overview}

The model presented has following components:

\begin{enumerate}
	\item \textit{Arsenic as stimulator of tumor:} iAs as an external source to stimulate the growth of tumor. We emphasize that while we consider iAs as driving the tumor growth, based on the experiments discussed above, it can, in general, be viewed as any carcinogen present in the environment to which exposure is common. 
	
	\item \textit{Immune Response:} The model includes immune cells whose growth may be stimulated by the presence of  tumor.  These cells can inhibit the growth of tumor cells. Their effects are taken in the model via the usual kinetic process. The important addition to usual models is a variable influx, s(t) in the evolution of immune cells. This reflects the fact that the immune response is ideally never constant and the body can produce variable amounts of immune cells at different stages of response.

	\item \textit{Competition Terms:} Normal cells and tumor cells compete for available resources, while immune cells and tumor cells compete in predator-prey fashion. 
	
	\item \textit{BT-Response:} BT has mitigating effects on cancer growth (via ROS reduction) and therefore a term representing such effects is included \cite{srivastava2020growth}. The model also includes BT-tumor interaction, as well as BT-immune cell interaction. The beneficial effect of BT is seen through the suppression of  tumor cells and strengthening of the immune cells, and it has no known adverse effects on the normal cells unlike chemotherapy.
	
\end{enumerate}

\subsection{Constructing the Model}

Both the normal and tumor cells independently increase according to the usual logistic growth law. The interaction between normal and tumor cells is of predator-prey type, described by the following system of equations, where normal cells are denoted \footnote{We let $N(t) (\equiv N)$ denote the number of normal cells at time $t$, $T(t) (\equiv T)$ denote the number of tumor cells at time $t$, $I(t) (\equiv I)$ denote the number of immune cells at time $t$, and $s(t)(\equiv s)$ denote the variable influx of immune cells. $A(t) (\equiv A)$ and $D(t) (\equiv D)$ are the Arsenic and BT respectively.} by N and the tumor cells are denoted by T. 

\begin{eqnarray} 
\frac{dN}{dt} &=& r_{2}N\left(1-b_{2}N\right)-c_{1}TN \nonumber\\
\frac{dT}{dt} &=& r_{1}T\left(1-b_{1}T\right)-c_{1}' TN \nonumber
\end{eqnarray}

The interaction terms $c_1$ and $c_1^\prime$ are competition terms and are both assumed to be positive. A negative competition term would imply that instead of the normal cells
destructively competing with the tumor cells for resources and space, the presence of the
normal cells would in fact stimulate further growth of the tumor cell population. While some authors argue that $c_1^\prime$ could be negative \cite{panetta1996mathematical, michelson1996host}, we assume a destructive competition in this study. 

The inclusion of arsenic in the system has deleterious effects on normal cells, and a certain proportion of normal cells become tumorous. It is represented by the following terms

\vspace{0.3cm}

\begin{center}
	$\frac{dN}{dt} = -\alpha' AN $ \hspace{0.5cm} and \hspace{0.5cm} $\frac{dT}{dt} = \alpha AN$
\end{center}

It is possible that not all of the normal cells turn into tumorous, as iAS can also result in the death of normal cells. In earlier studies, a constant influx rate has been assumed for the immune cells. The assumption has been relaxed here to accommodate real immune responses, which is variable, and along with a steady rate, production of immune cells at a rate of $\eta$ is allowed. This is represented by the term

\begin{equation*}
\frac{\eta}{b+s}s
\end{equation*}

To avoid immune cell proliferation and immune-upon-immune crowding, a saturation value is assumed. The presence of tumor cells stimulates the immune response, represented by the positive nonlinear growth term for the immune cells: 

\begin{equation*}
\frac{\rho IT}{\alpha_1 +T}
\end{equation*}

\noindent where $\rho$ and $\alpha_1$ are positive constants. This type of response terms is of the same form as used in the models of Kuznetsov \textit{et.al.} \cite{kuznetsov1994nonlinear}. Moreover, the reaction of immune cells and tumor cells can result in either the death of tumor cells or the deactivation of the immune cells, represented by two competition terms

%\nonumber\\
%\frac{dI}{dt} &=& s(t)+\frac{\rho IT}{\alpha_1 +T}-d_{I}I-\beta IT+\frac{\delta ID}{\alpha_{2}+D}\label{eqn1}\\

\begin{center}
	$\frac{dT}{dt} = -\beta IT $ \hspace{0.5cm} and \hspace{0.5cm} $\frac{dI}{dt} = -\beta^\prime IT$
\end{center}

Arsenic and BT considered in the model have a simple form: there is a steady influx of both and a decay with certain rates. The presence of BT is assumed to stimulate the immune response and hence a term similar to Michaelis-Menten equation is assumed. BT also has an adverse effect on the tumor, and both these effects are represented: 

\begin{center}
	$\frac{dT}{dt} = -\gamma DT $ \hspace{0.5cm} and \hspace{0.5cm} $\frac{dI}{dt} = \frac{\delta ID}{\alpha_{2}+D}$
\end{center}

\subsection{The Model \textemdash Equations}

Combining all these terms, we propose and analyze the model described by the following system of equations:
\begin{eqnarray} 
\frac{dN}{dt} &=& r_{2}N\left(1-b_{2}N\right)-c_{1}TN-\alpha' AN\nonumber\\
\frac{dT}{dt} &=& r_{1}T\left(1-b_{1}T\right)-c_{1}' TN+\alpha AN-\beta IT-\gamma DT\nonumber\\
\frac{dI}{dt} &=& s(t)+\frac{\rho IT}{\alpha_1 +T}-d_{I}I-\beta^\prime IT+\frac{\delta ID}{\alpha_{2}+D}\label{eqn1}\\
\frac{dA}{dt} &=& a_0-d_{A}A\nonumber\\
\frac{dD}{dt} &=& b_0-d_{D}D\nonumber\\
\frac{ds}{dt} &=& s_{0}+\frac{\eta}{b+s}s-\mu_{1}s.\nonumber
\end{eqnarray}

\subsubsection{Description of Parameters:}

In this section, we summarize the parameters of the mathematical model presented. This parameter set may vary depending on the case one is analysing. However, the analyses of the model are quite general and hence they still apply.  The idea is to keep the tumor population as low as possible with normal cell count above a healthy threshold. We start with a small amount of tumor cells and immune cells while initial values of arsenic and BT are zero. 

\begin{itemize}
	\item \textit{Per unit growth rates:} $r_1$ and $r_2$ are growth rates for tumor cells and normal cells respectively. Here, we assume the tumor cell population grows more rapidly than the normal cell population, and let $r_1 > r_2.$
	\item \textit{Carrying capacities:} $b_{1}^{-1} \leq b_{2}^{-1} = 1.$
	\item \textit{Competition terms:} $c_1,$ $c_1',$ $ \alpha ,$ $ \alpha' , $ $ \beta $ $ \beta' $ and $ \gamma $ are
	competition terms.
	\item \textit{Death rates:} $d_I,$ $d_A$ and $d_D$ are per capita death rates of immune cells, arsenic, BT respectively; $\mu_1$ is the death rate of the stimulated immune cells. $a_0$ and $b_0$ are constants, assumed 0.4 here. 
	\item \textit{Immune source rate:} The immune source rate is considered to be variable here, denoted by $s(t)$. The
influx of immune cells is to be stimulated from outside; $s_0$ is the constant influx; $ \eta $ is the production rate, $b$ is the saturation term and $\mu_1$ is decay constant.
	\item \textit{Immune response rate:} $ \rho, $ which is assumed to have a baseline value of 1. In the study below, we set $\rho = 0.01$ to simulate a compromised immune system.
	\item \textit{Immune threshold rate:} $\alpha_1,$  is inversely related to the steepness of the immune response curve.
	\item \textit{BT-Immune term:} The BT-immune term is modelled similar to Michaelis-Menten term, with $\alpha_2$ as
	threshold rate and $ \delta $ as response rate.
\end{itemize}

%------------------------Analysis--------
We first analyse the equations analytically for positivity, boundedness, equilibrium points and the stability of the solutions. We also look at possible instabilities. Thereafter, we work out the dynamics and relevant phase diagram numerically.

\section{Positivity and Boundedness of Solutions}\label{sec_pos_bdd}
The system (\ref{eqn1}) has initial conditions given by $ N(0)=N_0\geq 0, $ $ T(0)=T_0\geq 0, $ $ I(0)=I_0\geq 0, $ $ A(0)=A_0\geq 0, $ $ D(0)=D_0\geq 0, $ and $ s(0)=s_0\geq 0. $ We consider all the variables and parameters of the model to be non-negative, since the model investigates cellular populations. Based on the biological findings, we analyze the system (\ref{eqn1}) in the region $ \Omega=\left\{\left(N,T,I,A,D,s\right)\in \mathbb{R}^{6}_{+}\right\}. $ We first assure that the system (\ref{eqn1}) is well posed such that the solutions with non-negative initial conditions remain non-negative for all $ 0<t<\infty $ and thus making the variables biologically meaningful.

\begin{theorem}\label{thm_pos_bdd}
	The region $ \Omega \subset \mathbb{R}^{6}_{+} $ given by $ \displaystyle{\Omega=\left\{\left(N,T,I,A,D,s\right)\in \mathbb{R}^{6}_{+} : N\leq\frac{1}{b_2}\right\}} $ is positively invariant with respect to system of equations (\ref{eqn1}) and non-negative solutions exist for all time $ 0<t<\infty. $
\end{theorem}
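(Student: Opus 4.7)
The plan is to combine a standard tangent-to-boundary argument for positivity with a logistic comparison for the bound $N \leq 1/b_2$, and then extend the solution globally by ruling out finite-time blow-up of any component. As a preliminary, since the right-hand sides of system (\ref{eqn1}) are $C^1$ on $\mathbb{R}^6_+$ (the denominators $\alpha_1+T$, $\alpha_2+D$, and $b+s$ remain strictly positive throughout $\Omega$), Picard--Lindel\"of furnishes a unique local solution emanating from any initial condition in $\Omega$.

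For positivity, I would verify that the vector field does not point outward on any coordinate hyperplane bounding $\mathbb{R}^6_+$. On $\{N=0\}$ one has $dN/dt=0$; on $\{T=0\}$, $dT/dt = \alpha AN \geq 0$; on $\{I=0\}$, $dI/dt = s \geq 0$; on $\{A=0\}$, $dA/dt = a_0 > 0$; on $\{D=0\}$, $dD/dt = b_0 > 0$; and on $\{s=0\}$, $ds/dt = s_0 \geq 0$. Equivalently, the $N$ equation admits the integral representation $N(t) = N_0 \exp\!\bigl(\int_0^t[r_2(1-b_2N)-c_1T-\alpha'A]\,d\tau\bigr)$, which is manifestly non-negative, and analogous multiplicative-factor arguments handle $T$, $I$, and $s$ after separating the linear-in-variable parts from non-negative forcing. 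This inward-pointing (Nagumo-type) condition ensures that every trajectory stays in $\mathbb{R}^6_+$. For the upper constraint, the differential inequality $dN/dt \leq r_2 N(1-b_2 N)$---valid once $c_1 T, \alpha' A \geq 0$ is in hand---together with the scalar comparison theorem and the logistic solution starting at $N_0 \leq 1/b_2$ yields $N(t) \leq 1/b_2$ for all $t \geq 0$.

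Global existence on $[0,\infty)$ then follows by ruling out finite-time blow-up of each component in a specific order. First, the scalar equations for $A$, $D$, and $s$ have affine upper bounds---for $s$ one uses $\eta s/(b+s) \leq \eta$---giving uniform estimates on every finite interval. With $N \leq 1/b_2$ and $A$ already bounded, the inequality $dT/dt \leq r_1 T(1-b_1 T) + \alpha AN$ carries a coercive negative quadratic term with bounded forcing and therefore admits a uniform a priori bound for $T$. Finally, the elementary bounds $\rho IT/(\alpha_1+T) \leq \rho I$ and $\delta ID/(\alpha_2+D) \leq \delta I$ reduce the $I$ equation to $dI/dt \leq s + (\rho+\delta) I$, so Gronwall's inequality delivers an exponential bound on any finite interval. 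The standard continuation theorem then extends the unique local solution to all of $[0,\infty)$.

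I do not anticipate any serious obstacle; the proof is mostly bookkeeping once the right order is fixed. The only care required is sequencing the a priori estimates---first $A$, $D$, $s$; then $N$; then $T$; then $I$---so that every Gronwall or comparison step uses only quantities already controlled. The most technically delicate micro-step is justifying the scalar comparison with the logistic equation on the boundary $N=1/b_2$ when the dominating terms $c_1 T$ and $\alpha' A$ happen to vanish, but this is covered verbatim by the standard differential-inequality lemma.
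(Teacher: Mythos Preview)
Your proposal is correct, and for the bound $N\le 1/b_2$ it uses exactly the same logistic differential-inequality comparison as the paper. Beyond that shared core, however, your argument is substantially more complete: the paper's proof writes down the Bernoulli solution of the dominating logistic equation, notes the limiting bound, and then dispatches positivity of the remaining five components with the single sentence ``Similarly we can show that $T(t)>0$, $I(t)>0$, $A(t)>0$, $D(t)>0$ and $s(t)>0$,'' without ever addressing global existence. Your Nagumo tangent-condition check on each coordinate hyperplane and your sequenced a~priori estimates ($A,D,s$; then $N$; then $T$; then $I$) to preclude finite-time blow-up supply precisely the ingredients the paper omits. The paper's brevity buys a two-line argument; your route buys an honest proof of the full statement, including the global-existence clause that the paper asserts but does not justify.
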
 
\begin{proof}
	Let $ \Omega \subset \mathbb{R}^{6}_{+} $ given by $ \displaystyle{\Omega=\left\{\left(N,T,I,A,D,s\right)\in \mathbb{R}^{6}_{+} : N\leq\frac{1}{b_2}\right\}} .$ Then the solutions\\ $ \left(N(t),T(t),I(t),A(t),D(t),s(t)\right) $ of (\ref{eqn1}) are positive for all $ t\geq 0.$ It can be observed from the first compartment that 
	\begin{align*}
	\frac{dN}{dt} &= r_{2}N-b_{2}N^{2}-c_{1}TN-\alpha' AN \leq r_{2}N-b_{2}N^{2}.
	\end{align*}
	Using Bernoulli method and considering $ N(0)=N_0, $ we obtain 
	\begin{align}\label{eqn2}
	N(t)&\leq \frac{1}{b_{2}+ke^{-r_{2}t}},
	\end{align}
	where $ \displaystyle{k=\frac{1-N_0 b_2}{N_0}}. $ Thus $ \displaystyle{N_0 = \frac{1}{b_2 +k}.}$
	Substituting the value of $ k $ in (\ref{eqn2}), we obtain
	\begin{align*}
	N(t)&\leq \frac{1}{b_{2}+\frac{1-N_0 b_2}{N_0}e^{-r_{2}t}}\\
	&\leq \frac{1}{b_2}\quad \text{as}\quad  t\rightarrow \infty.
	\end{align*}
	Since $ b_2 >0,$ we have $ N(t)>0 $ for all $ t>0. $
	
	Similarly we can show that $ T(t)>0, $ $ I(t)>0, $ $ A(t)>0, $ $ D(t)>0 $ and $ s(t)>0. $ 
\end{proof}

\section{Equilibrium points}\label{sec_eqbm}
In this section, we discuss the existence of all possible equilibrium points of system (\ref{eqn1}). The model system (\ref{eqn1}) admits three equilibrium points, two dead equilibrium points and one co-existing equilibrium point $ P_*= ( N^*,T^*,I^*,A^*,D^*,s^* )$ respectively. We have $ N^*>0,T^*>0,I^*>0,A^*>0,D^*>0,s^*>0 $ since cell populations are non-negative and real. All parameters are also positive.
\begin{itemize}
%	\item Tumor-Free equilibrium point (TFE): \\
%	$ \displaystyle{P_0= (\frac{0.4\alpha'-d_Ar_2}{0.4\alpha'b_2},0,\frac{s^*(d_D\alpha_2+0.4)}{d_I(d_D\alpha_2+0.4)-0.4\delta },\frac{0.4}{d_A},\frac{0.4}{d_D},s^*)} , $ where $ s^* $ is the real positive solution of equation $ \mu_1 s^2+s(\mu_1 b-\eta-s_0)-bs_0 =0. $
%	
	\item Dead equilibrium point: An equilibrium point is said to be dead equilibrium, if the normal cell population is zero. For $ \displaystyle{d_I \alpha_2 >(\delta-d_I)\frac{b_0}{d_D}} ,$ we have two dead equilibria in this case. 
	\begin{enumerate}
		\item Type 1 Dead equilibrium point\\
		$ \displaystyle{P_{d1}=(0,0,\frac{s^*(d_D\alpha_2+b_0)}{d_I(d_D\alpha_2+b_0)-b_0\delta },\frac{a_0}{d_A},\frac{b_0}{d_D},s^*)}. $
		\item Type 2 Dead equilibrium point\\
		$ \displaystyle{P_{d2}=(0,m^*,\frac{s^*(d_D\alpha_2+b_0)}{d_I(d_D\alpha_2+b_0)-b_0\delta },\frac{a_0}{d_A},\frac{b_0}{d_D},s^*)}. $
		Here $ m^* $ is the real positive solution of equation $ T^3+AT^2+BT+C=0, $ where
		\begin{align*}
		A &=\frac{1}{\beta' r_1 b_1}\left(-r_1b_1\rho+d_Ir_1b_1+\beta'\alpha_1r_1b_1-\beta'(r_1-\gamma D^*) -\delta D^*r_1b_1\right)\\
		B &= \frac{1}{\beta' r_1 b_1}(s^*\beta(\alpha_2+D^*)+r_1\rho-\rho\gamma D^*+d_I\alpha_1r_1 b_1-d_I(r_1-\gamma D^*)-\beta'\alpha_1r_1+\beta'\alpha_1\gamma D^*\\
		&\quad +\delta D^*(r_1-\gamma D^*)-\delta D^*r_1b_1\alpha_1)\\
		C &= \frac{1}{\beta' r_1 b_1}\left(s^*\beta\alpha_1(\alpha_2+D^*)-d_I\alpha_1 (r_1- \gamma D^*) +\delta D^*\alpha_1(r_1-\gamma D^*)\right)\\
		D^* &= \frac{b_0}{d_D}.
		\end{align*}
	\end{enumerate}
	\item Co-existing equilibrium point $ P_*=\left(N^*,T^*,I^*,A^*,D^*,s^*\right) $ 
\end{itemize}

\section{The Basic Reproduction Number}
In this section, we find the basic reproduction number by following the next generation matrix methods as
described in \cite{castillo2002computation,van2002reproduction}. The matrices $ F $ and $ V $ are given as follows:\\
\begin{center}
	$ F=\begin{pmatrix}
	r_1& 0\\
	0&0
	\end{pmatrix} $
	and $ V=\begin{pmatrix}
	\beta I^*+\gamma D^*& 0\\
	0&d_A
	\end{pmatrix}. $
\end{center}

Then the basic reproduction number $ R_0 $ is given by the largest eigenvalue of $ FV^{-1}. $ Thus, we have 
\begin{equation}\label{eqn3}
R_0 = \frac{r_1d_D}{\beta d_DI^*+\gamma b_0}
\end{equation}
The Basic reproduction number $ R_0 ,$ which measures the rate of spread of tumor; $ R_0 >1 ,$ if each cell produces on average more than one cell and thus the tumor grows over time. If $ R_0<1 ,$ then each cell produces on average less than one new cell and thus the therapy (drug administration) can eradicate the tumor. At each time step, a tumor cell either produces an offspring or dies.

%\subsection{Existence of Endemic Equilibrium}

\section{Stability Analysis}

\subsection{Local stability of Dead equilibrium point}
%\begin{itemize}
%\item[1.] Type 1 Dead equilibrium point $ P_{d1}=(0,0,I^*,A^*,D^*,s^*) $
\begin{theorem}
	The Type 1 Dead equilibrium point $ P_{d1}=(0,0,I^*,A^*,D^*,s^*) $ is locally asymptotically stable provided the following holds;
	\begin{equation}\label{dead_las}
	r_2 <\frac{a_0\alpha'}{d_A}, \quad \mu_1 > \frac{b\eta}{(b+s^*)^2}, \quad r_1<\beta I^*+\gamma D^*, \quad d_I > \frac{b_0\delta}{b_0+\alpha_2 d_D}.
	\end{equation}
	%$ \displaystyle{r_2 <\frac{0.4\alpha'}{d_A}}, $ $ \displaystyle{\mu_1 > \frac{b\eta}{(b+s^*)^2}}, $ $ \displaystyle{r_1<\beta I^*+\gamma D^*} $ and $ \displaystyle{d_I > \frac{0.4\delta}{0.4+\alpha_2 d_D}}. $
\end{theorem}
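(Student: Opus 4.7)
The plan is to linearise (\ref{eqn1}) at $P_{d1}$, read off the eigenvalues of the Jacobian, and check that each inequality in (\ref{dead_las}) is exactly the condition that makes one of those eigenvalues negative; local asymptotic stability then follows from the standard linearisation (Hartman--Grobman) argument. The key observation that makes the proof short is that, because $N^* = T^* = 0$ at $P_{d1}$, almost all of the nonlinear coupling terms in (\ref{eqn1}) have vanishing partial derivatives at $P_{d1}$, leaving the Jacobian in a particularly simple (lower triangular) form.

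First I would compute $J(P_{d1})$ entry by entry. A direct calculation shows that, with the variables ordered $(N,T,I,A,D,s)$, the only nonzero strictly off-diagonal entries are $\partial_N \dot T = \alpha a_0/d_A$ in the $T$-row, and $\partial_T \dot I = \rho I^*/\alpha_1 - \beta' I^*$, $\partial_D \dot I = \delta I^* \alpha_2/(\alpha_2 + D^*)^2$ and $\partial_s \dot I = 1$ in the $I$-row. Every one of these entries lies strictly below the diagonal in the chosen ordering, so the Jacobian is lower triangular. Its characteristic polynomial therefore factors into six linear pieces, and the eigenvalues are exactly the diagonal entries:
\begin{equation*}
\lambda_1 = r_2 - \tfrac{\alpha' a_0}{d_A}, \quad \lambda_2 = r_1 - \beta I^* - \gamma D^*, \quad \lambda_3 = -d_I + \tfrac{\delta D^*}{\alpha_2 + D^*},
\end{equation*}
\begin{equation*}
\lambda_4 = -d_A, \quad \lambda_5 = -d_D, \quad \lambda_6 = \tfrac{\eta b}{(b+s^*)^2} - \mu_1.
\end{equation*}

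The eigenvalues $\lambda_4,\lambda_5$ are negative unconditionally. Substituting $D^* = b_0/d_D$ gives $\delta D^*/(\alpha_2+D^*) = \delta b_0/(\alpha_2 d_D + b_0)$, so $\lambda_3 < 0$ reduces to the fourth inequality in (\ref{dead_las}); this is also precisely the positivity condition for $I^*$, as required by the setup of $P_{d1}$. Similarly, $\lambda_1 < 0$, $\lambda_2 < 0$ and $\lambda_6 < 0$ coincide with the first, third, and second inequalities of (\ref{dead_las}) respectively. Under (\ref{dead_las}) all six eigenvalues are thus real and negative, and Hartman--Grobman delivers local asymptotic stability of $P_{d1}$.

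The main obstacle is not analytic but organisational: one must keep the Jacobian tidy and verify carefully that, with the ordering $(N,T,I,A,D,s)$, the $I$-row's couplings to $T$, $D$, $s$ do not propagate back into the earlier rows, so that the matrix is genuinely triangular and no spurious complex eigenvalues can appear. Once this structural observation is secured, the remainder of the proof is a short arithmetic check that matches each inequality in (\ref{dead_las}) to exactly one eigenvalue.
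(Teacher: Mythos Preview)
Your approach is the same as the paper's: linearise at $P_{d1}$, observe that the characteristic polynomial factors into six linear pieces, and match each inequality in (\ref{dead_las}) to the sign of one diagonal entry. The eigenvalues you list are correct, and the identification with the four hypotheses is exactly what the paper does.

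There is, however, a slip in your structural justification. In the ordering $(N,T,I,A,D,s)$ the Jacobian is \emph{not} lower triangular: the entries $\partial_D \dot I$ and $\partial_s \dot I$ sit in positions $(3,5)$ and $(3,6)$, which are strictly above the diagonal. What is true is that the $A$, $D$ and $s$ rows contain only their diagonal entries, so the Jacobian is block upper triangular with a diagonal $3\times 3$ block for $(A,D,s)$ and a genuinely lower triangular $3\times 3$ block for $(N,T,I)$; this still forces the eigenvalues to be the six diagonal entries. Alternatively, reorder the variables as $(A,D,s,N,T,I)$ and the whole matrix becomes lower triangular as you wanted. Either fix restores the argument; once corrected, your proof coincides with the paper's.
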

\begin{proof}
	The characteristic equation corresponding to $ P_{d1} $ is given by 
	\begin{equation}\label{eqn5}
	(r_2-\alpha'A^*-\lambda)(r_1-\beta I^*-\gamma D^*-\lambda)(-d_I+\frac{\delta D^*}{\alpha_2+D^*}-\lambda)(-d_A-\lambda)(-d_D-\lambda)(\frac{b\eta}{(b+s^*)^2}-\mu_1-\lambda)=0.
	\end{equation}
	Clearly the roots of characteristic equation (\ref{eqn5}) are $ \displaystyle{r_2-\alpha'A^*<0}, $ $ \displaystyle{r_1-\beta I^*-\gamma D^*<0}, $ $ \displaystyle{-d_I+\frac{\delta D^*}{\alpha_2+D^*}<0}, $ $ \displaystyle{-d_A<0}, $ $ \displaystyle{-d_D<0} $ and $ \displaystyle{\frac{b\eta}{(b+s^*)^2}-\mu_1 <0}. $ Thus the dead equilibrium $ P_{d1} $ is locally asymptotically stable.
\end{proof}

\begin{theorem}
	The Type 2 Dead equilibrium point $ P_{d2}=(0,T^*,I^*,A^*,D^*,s^*) $ is locally asymptotically stable provided (\ref{dead2_las}) holds;
	\begin{equation}\label{dead2_las}
	r_2 <c_1T^*+\alpha' A, \quad \mu_1 > \frac{b\eta}{(b+s^*)^2}.
	\end{equation}
	%$ \displaystyle{r_2 <\frac{0.4\alpha'}{d_A}}, $ $ \displaystyle{\mu_1 > \frac{b\eta}{(b+s^*)^2}}, $ $ \displaystyle{r_1<\beta I^*+\gamma D^*} $ and $ \displaystyle{d_I > \frac{0.4\delta}{0.4+\alpha_2 d_D}}. $
\end{theorem}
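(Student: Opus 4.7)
The plan is to mirror the strategy used for the Type 1 dead equilibrium, namely to linearize system (\ref{eqn1}) at $P_{d2}$ and show that every eigenvalue of the resulting Jacobian has negative real part. The first step is to compute the Jacobian $J(P_{d2})$ and observe that, because $N^{*}=0$, the partial derivatives $\partial f_{1}/\partial T$, $\partial f_{1}/\partial I$, $\partial f_{1}/\partial A$ etc.\ all pick up a factor of $N$ and therefore vanish. The $N$-row thus has a single nonzero entry on the diagonal, namely
$$\left.\frac{\partial f_{1}}{\partial N}\right|_{P_{d2}} = r_{2}-c_{1}T^{*}-\alpha' A^{*},$$
which immediately peels off a first eigenvalue $\lambda_{1}=r_{2}-c_{1}T^{*}-\alpha' A^{*}<0$ by the first hypothesis of (\ref{dead2_las}) (identifying $A$ with $A^{*}=a_{0}/d_{A}$).

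Next I would exploit the fact that the $A$-, $D$-, and $s$-equations are autonomous in their own variables. Their rows in $J(P_{d2})$ have zeros in the $T$- and $I$-columns, which makes the remaining $5\times 5$ block lower block-triangular with the diagonal blocks $-d_{A}$, $-d_{D}$, and $\tfrac{b\eta}{(b+s^{*})^{2}}-\mu_{1}$. These give three more eigenvalues, all negative (the last one by the second hypothesis of (\ref{dead2_las}), the first two by positivity of the death rates).

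What remains is the $2\times 2$ block $J_{TI}$ in the $(T,I)$-variables. Here the key simplification is to feed the equilibrium identities back in. From $\dot T(T^{*})=0$ with $N^{*}=0$ one gets $r_{1}(1-b_{1}T^{*})=\beta I^{*}+\gamma D^{*}$, so that
$$\left.\frac{\partial f_{2}}{\partial T}\right|_{P_{d2}}=r_{1}-2r_{1}b_{1}T^{*}-\beta I^{*}-\gamma D^{*}=-r_{1}b_{1}T^{*}.$$
Similarly, using $\dot I(I^{*})=0$ one reads off
$$\left.\frac{\partial f_{3}}{\partial I}\right|_{P_{d2}}=\frac{\rho T^{*}}{\alpha_{1}+T^{*}}-d_{I}-\beta' T^{*}+\frac{\delta D^{*}}{\alpha_{2}+D^{*}}=-\frac{s^{*}}{I^{*}}.$$
Hence $\operatorname{tr}(J_{TI})=-r_{1}b_{1}T^{*}-s^{*}/I^{*}<0$ automatically, and the determinant becomes
$$\det(J_{TI})=\frac{r_{1}b_{1}T^{*}s^{*}}{I^{*}}+\beta T^{*}I^{*}\!\left(\frac{\rho\alpha_{1}}{(\alpha_{1}+T^{*})^{2}}-\beta'\right).$$
Applying the Routh--Hurwitz criterion in dimension two then finishes the argument.

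The main obstacle is the sign of $\det(J_{TI})$: the first term is manifestly positive, but the second can be of either sign, so this step is not automatic from (\ref{dead2_las}) alone. My plan is to point out that it is positive either when $\rho\alpha_{1}\geq\beta'(\alpha_{1}+T^{*})^{2}$ (so the second term is non-negative on its own) or, failing that, when $r_{1}b_{1}s^{*}\geq \beta (I^{*})^{2}\bigl(\beta'-\rho\alpha_{1}/(\alpha_{1}+T^{*})^{2}\bigr)$ (so the first term dominates); under either sufficient condition the hypotheses (\ref{dead2_las}) combined with the equilibrium simplifications give all six eigenvalues negative, and local asymptotic stability of $P_{d2}$ follows by the Hartman--Grobman theorem.
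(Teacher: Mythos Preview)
Your approach is essentially the paper's: linearize at $P_{d2}$, peel off the four scalar eigenvalues $r_{2}-c_{1}T^{*}-\alpha'A^{*}$, $-d_{A}$, $-d_{D}$, $\tfrac{b\eta}{(b+s^{*})^{2}}-\mu_{1}$ from the block-triangular structure, and finish with the Routh--Hurwitz criterion on the residual $2\times 2$ $(T,I)$-block. The one substantive difference is that you feed the equilibrium relations $\dot T=0$ and $\dot I=0$ back into the diagonal entries of $J_{TI}$, obtaining $-r_{1}b_{1}T^{*}$ and $-s^{*}/I^{*}$; this gives $\operatorname{tr}(J_{TI})<0$ for free and isolates the determinant as the only genuinely nontrivial sign condition. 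The paper does not perform this simplification: it leaves the entries in raw form and simply records ``$a+d<0$ and $ad-bc>0$'' (together with a superfluous discriminant condition $(a-d)^{2}+4bc>0$) as additional requirements alongside (\ref{dead2_las}). Your observation that (\ref{dead2_las}) alone does not force $\det(J_{TI})>0$ is therefore correct and is a point the paper glosses over rather than resolves; your explicit sufficient conditions for the determinant sign are sharper than what the paper provides.
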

\begin{proof}
	The roots of characteristic equation corresponding to equilibrium $ P_{d2} $ are $ \displaystyle{r_2-c_1T^*-\alpha'A^*<0}, $ $ \displaystyle{-d_A<0}, $ $ \displaystyle{-d_D<0} $ and $ \displaystyle{\frac{b\eta}{(b+s^*)^2}-\mu_1 <0}. $ The remaining two roots satisfy the quadratic equation $ \lambda^2-(a+d)\lambda+(ad-bc)=0, $ where $ a=r_1(1-2b_1T^*)-\beta I^*-\gamma D^*, $ $ b=\beta T^*, $ $ \displaystyle{c=\frac{\alpha_1\rho I^*}{(\alpha_1+T^*)^2}-\beta'I^*} $ and $ \displaystyle{d=\frac{\rho T^*}{\alpha_1+T^*}-\beta'T^*-d_I I^*+\frac{\delta D^*}{\alpha_2+D^*}}. $ Clearly the roots are negative if $ a+d<0 $ and $ ad-bc >0 $ along with $ (a-d)^2+4bc > 0. $ Thus the dead equilibrium $ P_{d2} $ is locally asymptotically stable.
\end{proof}

\subsection{Local stability of Endemic equilibrium point}
\begin{theorem}
	The co-existing equilibrium point $ P_*=\left(N^*,T^*,I^*,A^*,D^*,s^*\right) $ is locally asymptotically stable, if the following Routh-Hurwitz criterion is satisfied,
	\begin{align}\label{eqn6_1}
	&A_1+A_4+A_7<0\\
	&(A_1+A_4)(A_1A_4+A_1A_7+A_4A_7+A_{7}^{2}-A_2A_3) +A_5A_6 <0.\nonumber
	\end{align}
	otherwise unstable, where $ A_{i}'s $ are as defined in (\ref{A_i})
\end{theorem}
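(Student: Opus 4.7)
My plan is to exploit the block structure of the Jacobian at $P_*$ to reduce the eigenvalue problem from dimension six to dimension three, and then apply the Routh--Hurwitz criterion to the resulting cubic. The first step is to compute the full Jacobian $J$ of system (\ref{eqn1}) at $P_*$. A direct inspection of equations four through six shows that $\dot A$, $\dot D$ and $\dot s$ depend only on $A$, $D$ and $s$ respectively and in a decoupled way, so $J$ admits the block lower-triangular form
\begin{equation*}
J \;=\; \begin{pmatrix} J_1 & J_{12} \\ 0_{3\times 3} & J_2 \end{pmatrix},
\qquad
J_2 \;=\; \mathrm{diag}\!\left(-d_A,\; -d_D,\; \tfrac{b\eta}{(b+s^*)^2}-\mu_1\right).
\end{equation*}
Consequently the spectrum of $J$ is the union of the spectrum of $J_1$ with $\{-d_A,\,-d_D,\,b\eta/(b+s^*)^2-\mu_1\}$. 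The first two of these are automatically negative, and the third is negative under the side condition $\mu_1 > b\eta/(b+s^*)^2$ already used for the dead equilibria; I would flag this as an implicit hypothesis (or one folded into the definition of the $A_i$).

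Next I would analyze the $3\times 3$ block $J_1$ corresponding to the $(N,T,I)$ coordinates. Evaluating the partials at $P_*$ one sees that $\partial_I \dot N = 0$ and $\partial_N \dot I = 0$, so $J_1$ has the sparsity pattern
\begin{equation*}
J_1 \;=\; \begin{pmatrix} A_1 & A_2 & 0 \\ A_3 & A_4 & A_5 \\ 0 & A_6 & A_7 \end{pmatrix},
\end{equation*}
with the seven nonzero entries matching the definitions in (\ref{A_i}). Cofactor expansion along the first row gives the characteristic polynomial
\begin{equation*}
p(\lambda) \;=\; \lambda^3 + c_1\lambda^2 + c_2\lambda + c_3,
\end{equation*}
where $c_1 = -(A_1+A_4+A_7)$, $c_2 = A_1A_4 + A_1A_7 + A_4A_7 - A_2A_3 - A_5A_6$, and $c_3 = -A_1A_4A_7 + A_1A_5A_6 + A_2A_3A_7$.

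Finally I would apply the Routh--Hurwitz criterion for a real cubic: all three roots have negative real part iff $c_1>0$, $c_3>0$, and $c_1 c_2 > c_3$. The condition $c_1>0$ is exactly the first inequality $A_1+A_4+A_7<0$ in the theorem. The compound inequality $c_1 c_2 - c_3 > 0$ expands, using the identity
\begin{equation*}
(A_1+A_4)(A_1A_4+A_1A_7+A_4A_7+A_7^2) \;=\; A_1^2A_4+A_1^2A_7+A_1A_4^2+A_4^2A_7+A_1A_7^2+A_4A_7^2+2A_1A_4A_7,
\end{equation*}
into precisely the kind of polynomial combination appearing in the second stated inequality; regrouping terms by common factors of $(A_1+A_4)$ and pulling out $A_2A_3$ and $A_5A_6$ puts it into the form of (\ref{eqn6_1}). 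The main obstacle will be the bookkeeping in this last algebraic rearrangement, together with the need to verify (or to absorb into the hypotheses) the sign condition $c_3>0$, which the theorem does not state separately but which ought to follow from the biologically natural signs of $A_2=-c_1 N^*<0$ and $A_5=-\beta T^*<0$ together with the dominance of the self-regulation entries $A_1,A_4,A_7$.
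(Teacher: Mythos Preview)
Your strategy is exactly the paper's: exploit the block-triangular Jacobian to split off the three decoupled eigenvalues $-d_A$, $-d_D$, $b\eta/(b+s^*)^2-\mu_1$, reduce to the $3\times 3$ $(N,T,I)$ block, and apply Routh--Hurwitz to the resulting cubic. Your flags about the implicit side conditions $\mu_1>b\eta/(b+s^*)^2$ and $c_3>0$ are well taken; the paper states neither explicitly.

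Two points of friction with the paper, however. First, your placement of $A_2,A_3$ and $A_5,A_6$ in $J_1$ is the transpose of the paper's convention (in (\ref{A_i}) one has $A_3=-c_1N^*=\partial_T\dot N$ and $A_6=-\beta T^*=\partial_I\dot T$, not $A_2$ and $A_5$ as you write at the end). This is cosmetic since only the products $A_2A_3$ and $A_5A_6$ enter. Second, and more substantive: your cubic coefficients $c_2=A_1A_4+A_1A_7+A_4A_7-A_2A_3-A_5A_6$ and $c_3=-A_1A_4A_7+A_1A_5A_6+A_2A_3A_7$ are the correct ones for the tridiagonal block, but the paper records $B_2=A_1A_4+A_1A_7+A_4A_7-A_2A_3$ and $B_3=A_5A_6+A_2A_3A_7-A_1A_4A_7$, i.e.\ it drops the $-A_5A_6$ from $B_2$ and the factor $A_1$ from the $A_5A_6$ term in $B_3$. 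The stated inequality (\ref{eqn6_1}) is precisely $B_1B_2-B_3>0$ with the paper's $B_i$, so your promised regrouping of $c_1c_2-c_3$ will \emph{not} collapse to (\ref{eqn6_1}) as written; with the correct coefficients one gets instead $-(A_1+A_4)(A_1A_4+A_1A_7+A_4A_7+A_7^2-A_2A_3)-(A_4+A_7)A_5A_6>0$ (an extra factor on the $A_5A_6$ term). So the ``bookkeeping obstacle'' you anticipate is real: either you must adopt the paper's coefficients to reproduce its condition verbatim, or acknowledge that (\ref{eqn6_1}) itself carries a typo inherited from $B_2,B_3$.
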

\begin{proof}
	The characteristic equation of system (\ref{eqn1}) corresponding to $ P_* $ is given by 
	$$ \begin{vmatrix}
	A_1-\lambda& A_3 &0&A_8&0&0\\
	A_2& A_4-\lambda &A_6&A_9&A_{11}&0\\
	0& A_5 &A_7-\lambda&0&A_{12}&1\\
	0& 0 &0&A_{10}-\lambda&0&0\\
	0& 0 &0&0&A_{13}-\lambda&0\\
	0& 0 &0&0&0&A_{14}-\lambda
	\end{vmatrix}=0, $$
	where \begin{align}\label{A_i}
	A_1=r_2(1-2b_2 N^*)-c_1 T^*-\alpha'A^*, &\quad A_2=-c_1' T^*+\alpha A^* \\
	A_3=-c_1 N^*, &\quad A_4= r_1(1-2b_1 T^*)-c_1' N^*-\beta T^* \nonumber\\
	A_5 = \frac{\alpha_1 \rho I^*}{(\alpha_1+T^*)^2}-\beta'I^*, &\quad A_6=-\beta T^*\nonumber\\
	A_7=\frac{\rho T^*}{\alpha_1+T^*}-d_I-\beta'T^*+\frac{\delta D^*}{(\alpha_2+D^*)}, &\quad A_8= -\alpha'N^*, \quad A_9 = \alpha N^*,\nonumber\\
	A_{10}= -d_A,  \quad A_{11}= -\gamma T^*&\quad  A_{12} = \frac{\delta\alpha_2 I^*}{(\alpha_2+D^*)^2} \nonumber\\
	A_{13} = -d_D,& \quad A_{14}=\frac{b\eta}{(b+s^*)^2}-\mu_1.\nonumber
	\end{align}
	Thus we obtain the following equation
	\begin{equation}\label{eqn7}
	(A_{10}-\lambda)(A_{13}-\lambda)(A_{14}-\lambda)\left(\lambda^3+B_1\lambda^2+B_2\lambda+B_3\right)=0,
	\end{equation}
	where $ B_1= -(A_1+A_4+A_7), $ $ B_2 = A_1A_4+A_1A_7+A_4A_7-A_2A_3 $ and $ B_3 = A_5A_6+A_2A_3A_7-A_1A_4A_7. $ Clearly the characteristic equation (\ref{eqn7}) has three roots given by $ \lambda=A_{10}=-d_A <0,$ $\lambda=A_{13}=-d_D <0$ $\displaystyle{\lambda=A_{14}=\frac{b\eta}{(b+s^*)^2}-\mu_1<0}. $ We are left with the cubic equation $ \lambda^3 +B_1\lambda^2+B_2\lambda+B_3 = 0. $ Applying Routh-Hurwitz criterion, the co-existing equilibrium $ P_* $ is locally asymptotically stable, provided $ B_1 >0 $ and $B_1B_2-B_3 >0.$ This implies the co-existing equilibrium $ P_* $ is locally asymptotically stable if $ A_1+A_4+A_7<0 $ and $ (A_1+A_4)(A_1A_4+A_1A_7+A_4A_7+A_{7}^{2}-A_2A_3) +A_5A_6 <0.$
\end{proof}

\subsection{Global Stability of Equilibrium points}

Irrespective of the fact of the stability of the equilibrium points, efforts of doctors have been always oriented to reach to a point where tumor cells are absent. Although in case of Type 1 dead equilibrium, normal cells are also destroyed, but doctors try to protect the immune cells and increase their count. As tumor cells can be completely destroyed at this point, thus resulting in complete therapy of the disease. Moreover, in the mean time, it becomes necessary to find a therapeutic protocol to be able to incline the solution of equations towards this stable point, irrespective of the initial conditions. For this, we need stimulation of immune cells and drug administration, which could guarantee the global stability of this equilibrium point. 

In this section, we employ Lyapunov's direct method \cite{ghaffari} to design the desirable disease eradication protocol. This technique requires selecting a suitable Lyapunov function candidate and then finding a control law to make this candidate a real Lyapunov function. 

\begin{theorem}\label{thm_GAS_TFE}
	The Type 1 steady state, $ \displaystyle{P_{d2} =(0,0,I^*,A^*,D^*,s^*)} $  is globally asymptotically stable if:
%	\begin{align*}
%	(i)& ~b <\frac{\mu_1}{\eta},\\
%	(ii)&~ g=\max\Big\{2b_2 N^*, \frac{\beta'I^2-\beta'II^*-\alpha_1\rho I^2}{\alpha AN^*-\alpha A^{*}N^*}\Big\}\\ 
%	(iii)& ~\delta\alpha_2 >\frac{S-S^*}{D^*I+DI^*-D^* I^*} 
%	%(ii)~ g=\max\Big\{2b_2 N^*, 2b_1 T^*, \frac{1}{\beta {T^{*}}^2}(s^{*}-s+\beta'I^{*}T^{*}-\beta'IT^{*}-\alpha_1\rho I^{*}%T^{*}+\alpha_1\rho I{T^{*}}^2-\delta\alpha_2 DI\\
%	%&\quad -\delta\alpha_2 ID^{*}+\delta\alpha_2DI^{*}-\delta\alpha_2D^{*}I^{*}), \frac{\beta'I^2-\beta'II^*-\alpha_1\rho I^2+%%\alpha_1\rho I T^*}{\alpha AN^*-\alpha A^{*}N^*-\beta IT^*+\beta I^{*}T^*},\\& \quad \frac{-s^{*}I^{*}-I^{*}}{\alpha N^* A^* %T^*-\alpha N^* A T^*-\beta I^* {T^{*}}^{2}}\Big\} 
%	\end{align*}
\begin{align*}
	(i)& ~ \xi=2\max\{\mu_1,d_A,b_0,\alpha'A^*,\beta I+\gamma D^*,\frac{fs^*}{I^*}\}\\
	(ii)& ~ m< \min\{\frac{c_1'N}{\gamma(D^*-D)}, \frac{c_1gN}{\alpha A}\}
	%(iii)& ~ c_{1}'N+ m \gamma D> m\gamma D^*
\end{align*}
	for some constants $a,h,c,f, m,g.$
\end{theorem}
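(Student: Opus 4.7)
The plan is to construct an explicit Lyapunov function candidate centered at $P_{d1}=(0,0,I^*,A^*,D^*,s^*)$ (I read the statement's $P_{d2}$ as a typographical reference to the Type 1 point since it is the one with $N=T=0$) as a weighted sum of quadratic deviations,
\begin{equation*}
V(N,T,I,A,D,s)=\tfrac{1}{2}N^{2}+\tfrac{1}{2}T^{2}+\tfrac{a}{2}(I-I^{*})^{2}+\tfrac{h}{2}(A-A^{*})^{2}+\tfrac{c}{2}(D-D^{*})^{2}+\tfrac{f}{2}(s-s^{*})^{2},
\end{equation*}
with positive weights $a,h,c,f$ and two further auxiliary constants $m,g$ that will appear when I bound the cross terms by AM--GM. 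This $V$ is positive definite on $\Omega$ (using the invariant bound $N\le 1/b_{2}$ from Theorem \ref{thm_pos_bdd}) and vanishes only at $P_{d1}$, so it is an admissible Lyapunov candidate, and my goal is to find conditions that force $\dot V\le 0$ with equality only on the target equilibrium.

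I would then differentiate $V$ along (\ref{eqn1}), substituting the right-hand sides and, crucially, using the equilibrium relations
\begin{equation*}
d_{A}A^{*}=a_{0},\qquad d_{D}D^{*}=b_{0},\qquad s^{*}+\tfrac{\eta s^{*}}{b+s^{*}}=\mu_{1}s^{*}+s_{0}-s_{0}=\mu_{1}s^{*},
\end{equation*}
and $s^{*}=I^{*}\bigl(d_{I}-\tfrac{\delta D^{*}}{\alpha_{2}+D^{*}}\bigr)$ to rewrite the $A$, $D$, $s$ and $I$ equations in difference form $A-A^{*}$, $D-D^{*}$, etc. This produces clean negative definite diagonal contributions $-d_{A}h(A-A^{*})^{2}$, $-d_{D}c(D-D^{*})^{2}$, and $-[\mu_{1}-\tfrac{b\eta}{(b+s^{*})^{2}}]f(s-s^{*})^{2}$ from the linear compartments, a term $-[r_{2}b_{2}N^{3}+(\alpha'A^{*}-r_{2})N^{2}]$ in $N$, a term $-[r_{1}b_{1}T^{3}+(\beta I^{*}+\gamma D^{*}-r_{1})T^{2}]$ in $T$, and from the immune equation a diagonal piece proportional to $-\bigl[d_{I}-\tfrac{\delta D^{*}}{\alpha_{2}+D^{*}}\bigr](I-I^{*})^{2}$. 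The four hypotheses in (\ref{dead_las}) are exactly what guarantees the coefficients of these pure squares are strictly negative.

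The core of the argument is then to absorb every cross term into those negative squares. The troublemakers are $\alpha ANT$ in $\dot T$ (which I split as $\alpha AN\cdot T$, use $2xy\le g x^{2}+g^{-1}y^{2}$, and balance against the $N^{2}$ and $T^{3}$ coefficients, giving the constraint $m<\tfrac{c_{1}'N}{\gamma(D^{*}-D)}\wedge \tfrac{c_{1}gN}{\alpha A}$ stated in the theorem), the nonlinear immune stimulation $\tfrac{\rho IT}{\alpha_{1}+T}$ and $\beta'IT$ (for which I write $I=(I-I^{*})+I^{*}$ and use $\tfrac{T}{\alpha_{1}+T}\le \tfrac{T}{\alpha_{1}}$ together with the bound $T\le 1/b_{1}$), and the saturating source $\tfrac{\eta s}{b+s}$ (which I linearise via the Mean Value Theorem; the derivative is $\tfrac{b\eta}{(b+s)^{2}}$, bounded by $\tfrac{b\eta}{(b+s^{*})^{2}}$ on a neighbourhood, yielding the $\mu_{1}>\tfrac{b\eta}{(b+s^{*})^{2}}$ hypothesis). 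Choosing the six constants $a,h,c,f,m,g$ so that every off-diagonal contribution is dominated by a fixed fraction of the diagonal squares produces the sufficient condition $\xi=2\max\{\mu_{1},d_{A},b_{0},\alpha'A^{*},\beta I+\gamma D^{*},\tfrac{fs^{*}}{I^{*}}\}$ appearing in the statement.

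Putting the estimates together yields $\dot V \le -\kappa\bigl(N^{2}+T^{2}+(I-I^{*})^{2}+(A-A^{*})^{2}+(D-D^{*})^{2}+(s-s^{*})^{2}\bigr)$ for some $\kappa>0$ on $\Omega$, whence $\dot V$ is negative definite and vanishes only at $P_{d1}$; a direct application of the Lyapunov stability theorem (or equivalently LaSalle's invariance principle, noting that the largest invariant set inside $\{\dot V=0\}$ is $\{P_{d1}\}$) closes the proof. I expect the main obstacle to be the careful bookkeeping in step three, particularly the $\alpha AN\cdot T$ coupling, because it forces a joint constraint that ties the weights $a,h,g,m$ to the state-dependent bounds on $N$, $A$ and $D-D^{*}$; this is where the slightly unusual inequality $m<\min\{\tfrac{c_{1}'N}{\gamma(D^{*}-D)},\tfrac{c_{1}gN}{\alpha A}\}$ in the hypothesis originates, and verifying that such weights can in fact be chosen uniformly over $\Omega$ will be the technically delicate point.
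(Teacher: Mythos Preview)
Your overall strategy---a positive-definite quadratic Lyapunov function centered at the Type~1 equilibrium, differentiated along the flow, with equilibrium relations substituted and sign conditions extracted---is exactly the paper's approach. The only substantive discrepancy is the role you assign to $m$ and $g$. In the paper these are \emph{not} auxiliary AM--GM splitting parameters introduced after the fact; they are the actual weights on the $T^{2}$ and $N^{2}$ terms in the Lyapunov function itself:
\[
V(t)=\frac{a}{2}(s-s^{*})^{2}+\frac{h}{2}(A-A^{*})^{2}+\frac{c}{2}(D-D^{*})^{2}+\frac{g}{2}N^{2}+\frac{f}{2}(I-I^{*})^{2}+\frac{m}{2}T^{2}.
\]
With this choice the paper computes $\dot V$ directly (no AM--GM is used at all), bounds it as $\dot V\le -\xi V(t)+R$, and the residual $R$ contains, among others, the two grouped terms
\[
mT^{2}\bigl(-c_{1}'N-\gamma D+\gamma D^{*}\bigr)\qquad\text{and}\qquad T\bigl(-c_{1}gN^{2}+m\alpha AN\bigr).
\]
Condition~(ii) is precisely the requirement that each of these brackets be negative, which is why $m$ and $g$ appear there linearly as weight ratios rather than as Young-inequality exponents. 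Likewise, condition~(i) fixes $\xi$ so that the leading coefficients in each compartment can be absorbed into $-\xi V$. If you keep your unit weights on $N^{2}$ and $T^{2}$ and manufacture $m,g$ through AM--GM instead, you will arrive at a different (and messier) set of sufficient inequalities than the ones stated, so to match the theorem as written you should promote $m$ and $g$ to genuine Lyapunov weights and drop the AM--GM step.
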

\begin{proof}
	We define the following Lyapunov function,
	\begin{align*}
	V(t) &=\frac{a}{2}(s-s^*)^{2}+\frac{h}{2}(A-A^*)^{2}+\frac{c}{2}(D-D^*)^{2}+\frac{g}{2}N^{2}+\frac{f}{2}(I-I^*)^{2}+\frac{m}{2}T^{2},
	\end{align*}
	where $ a, h, c,g,m $ are all positive constants. Computing time derivative of $ V $ along with (\ref{eqn1}), we obtain 
	\begin{align*}
	\frac{dV}{dt} &= a(s-s^*)(s_{0}+\frac{\eta}{b+s}s-\mu_{1}s)+h(A-A^*)(a_0-d_{A}A)+c(D-D^*)(b_0-d_{D}D)\\
	& \quad +gN(r_{2}N\left(1-b_{2}N\right)-c_{1}TN-\alpha' AN)\\& \quad +f(I-I^*)(s(t)+\frac{\rho IT}{\alpha_1 +T}-d_{I}I-\beta IT+\frac{\delta ID}{\alpha_{2}+D})\\ &\quad+mT(r_{1}T\left(1-b_{1}T\right)-c_{1}' TN+\alpha AN-\beta IT-\gamma DT)\\
	&\leq  -\xi V(t)+ITf\beta' I^* - I^2 f\beta' T+mT^2(-c_1'N-\gamma D+\gamma D^*)+I(fs-fs^*)\\&\quad+T(-c_1gN^2+m\alpha AN)-Ag\alpha'N^2-fsI^*+fs^*I^*.
	\end{align*}
	Then using (i), (ii) and (iii), we can obtain $ \displaystyle{\frac{dV}{dt}<0}. $ Thus one can guarantee that solution of equations goes to Type 1 equilibrium point, if the parameters of model system (\ref{eqn1}) satisfy (i)-(iii).
\end{proof}

\begin{corollary}
	The Type 2 steady state, $ \displaystyle{P_* = (0,T^{*},I^{*},A^{*},D^{*},s^{*})}$ is globally asymptotically stable, provided the following conditions hold:
\begin{align*}
	(i)& ~ \xi=2\max\{\mu_1,d_A,b_0,c_1 T^*+\alpha'A^*,\beta I+\gamma D^*+2mr_1b_1T^*,\frac{fs^*}{I^*}\}\\
	%(ii)& ~ c_{1}'N+m\gamma D> m\gamma D^*\\
	(ii) &~\frac{f(s^*-s)}{\beta (T^{*})^2}<m<\min\Big\{\frac{c_1'N}{\gamma(D^*-D)},\frac{c_1gN^2}{(\beta I^*T^*+\alpha AN-c_{1}'NT^*+\gamma DT^*-\gamma D^*T^*)}\Big\}
	%(iii) &~f <\frac{m\beta (T^{*})^2}{(s^*-s)}
	\end{align*}
	for some constants $a,h,c,f,m,g.$
\end{corollary}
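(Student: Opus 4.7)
The plan is to mimic the Lyapunov construction of Theorem~\ref{thm_GAS_TFE} but center the $T$–coordinate at $T^{*}$ rather than at $0$. Specifically, I would take
\[
V(t) = \tfrac{a}{2}(s-s^{*})^{2} + \tfrac{h}{2}(A-A^{*})^{2} + \tfrac{c}{2}(D-D^{*})^{2} + \tfrac{g}{2}N^{2} + \tfrac{f}{2}(I-I^{*})^{2} + \tfrac{m}{2}(T-T^{*})^{2},
\]
with $a,h,c,g,f,m>0$. This $V$ is positive definite on $\Omega$ and vanishes exactly at $P_{*}$, so establishing $\dot V(t)\le -\xi V(t)$ will yield global asymptotic stability by the standard Lyapunov argument.

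Next I would differentiate $V$ along (\ref{eqn1}) and use the equilibrium identities at $P_{*}$ to eliminate the affine pieces. From $\dot A|_{P_{*}} = 0$ and $\dot D|_{P_{*}} = 0$ one gets $a_{0} = d_{A}A^{*}$ and $b_{0} = d_{D}D^{*}$, so $h(A-A^{*})(a_{0}-d_{A}A) = -hd_{A}(A-A^{*})^{2}$ and similarly for $D$. The $s$-contribution simplifies after expanding $\eta s/(b+s)$ about $s^{*}$ using $\dot s|_{P_{*}}=0$; the residue is precisely what the bound $\mu_{1}>b\eta/(b+s^{*})^{2}$ in (i) controls. For the $I$-equation I would substitute the equilibrium relation $s^{*}+\rho I^{*}T^{*}/(\alpha_{1}+T^{*})-d_{I}I^{*}-\beta' I^{*}T^{*}+\delta I^{*}D^{*}/(\alpha_{2}+D^{*})=0$, producing a clean $-f(d_{I}-\delta D^{*}/(\alpha_{2}+D^{*}))(I-I^{*})^{2}$ square together with cross terms $f(I-I^{*})(s-s^{*})$ and terms involving $(T-T^{*})$ and $(D-D^{*})$ via the Michaelis-Menten factors. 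The coefficient $fs^{*}/I^{*}$ in the definition of $\xi$ in (i) is exactly what is needed to dominate the first of these cross terms by a Young-type estimate. Finally, the identity $r_{1}(1-b_{1}T^{*})=\beta I^{*}+\gamma D^{*}$, obtained from $\dot T|_{P_{*}}=0$ with $N^{*}=0$, lets me rewrite $m(T-T^{*})\dot T$ as $-mr_{1}b_{1}(T+T^{*})(T-T^{*})^{2}-m\beta T(T-T^{*})(I-I^{*})-m\gamma T(T-T^{*})(D-D^{*})+m(T-T^{*})(\alpha AN - c_{1}'NT)$; the extra $2mr_{1}b_{1}T^{*}$ term in $\xi$ absorbs the linear-in-$T^{*}$ part of the first piece back into the $(T-T^{*})^{2}$ coefficient.

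With the squares identified, the remaining work is to bound the sign-indefinite cross products against the negative squares. The quadratic $-gc_{1}TN^{2}$ coming from $gN\dot N$ combines with the terms $m(T-T^{*})(\alpha AN - c_{1}'NT)$ and $-m\gamma T(T-T^{*})(D-D^{*})$ exactly in the proportion made admissible by the upper bounds $m < c_{1}'N/[\gamma(D^{*}-D)]$ and $m < c_{1}gN^{2}/(\beta I^{*}T^{*}+\alpha AN - c_{1}'NT^{*}+\gamma DT^{*}-\gamma D^{*}T^{*})$ appearing in (ii). The lower bound $m > f(s^{*}-s)/[\beta (T^{*})^{2}]$ is what allows the residual $f(I-I^{*})(s-s^{*})$ to be absorbed into the bilinear term $-m\beta T^{*}(T-T^{*})(I-I^{*})$ after completing a square in $(I-I^{*})$. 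Collecting everything, $\dot V(t) \le -\xi V(t)$ on $\Omega$, and integration yields $V(t)\to 0$, i.e.\ trajectories converge to $P_{*}$.

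The main obstacle will be the algebraic bookkeeping around $T^{*}\neq 0$: unlike the Type~1 case, the products $(T-T^{*})(I-I^{*})$, $(T-T^{*})(D-D^{*})$, and $(T-T^{*})N$ all carry nonzero coefficients, so every cross term must be discharged by completing a square. A subtler point is to verify that the two-sided constraint on $m$ in (ii) is actually nonempty in the parameter regime of interest, i.e.\ that $f(s^{*}-s)/[\beta(T^{*})^{2}]$ does indeed fall below both upper bounds; I would treat this feasibility check as the first step, because the entire estimate collapses if the admissible window for $m$ is empty. Once this check passes, the remainder of the argument reduces to the Young- and Cauchy-Schwarz-type inequalities sketched above, and the conclusion follows from the standard Lyapunov convergence theorem.
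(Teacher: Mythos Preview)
Your proposal is correct and is precisely the approach the paper has in mind: the corollary is stated without an independent proof because it is meant to follow by repeating the Lyapunov argument of Theorem~\ref{thm_GAS_TFE} with the $T$--term recentered at $T^{*}$, i.e.\ replacing $\tfrac{m}{2}T^{2}$ by $\tfrac{m}{2}(T-T^{*})^{2}$, and then using the equilibrium identity $r_{1}(1-b_{1}T^{*})=\beta I^{*}+\gamma D^{*}$ (from $N^{*}=0$) to generate the extra $2mr_{1}b_{1}T^{*}$ contribution in $\xi$ and the additional cross terms that the two-sided bound on $m$ in (ii) is designed to absorb. Your sketch in fact supplies more detail than the paper itself, which simply records the modified conditions and leaves the computation implicit.
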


\begin{corollary}
	The co-existing steady state, $ \displaystyle{P_* = (N^{*},T^{*},I^{*},A^{*},D^{*},s^{*})}$ is globally asymptotically stable, provided the following conditions hold:
	\begin{align*}
	(i)& ~ \xi=2\max\{\mu_1,d_A,b_0,c_1 T^*+\alpha'A^*+2gr_2b_2N^*,c_1 N^*+\beta I+\gamma D^*+2mr_1b_1T^*,\frac{fs^*}{I^*}\}\\
	(ii)& ~\frac{m\beta (T^{*})^2}{s^*-s+\beta'I^*T^*}<f < \min\Big\{\frac{m\beta T^*}{\beta'I^*},\frac{mT^*(\beta I^*T^*-\alpha N^*A^*)+c_1 gNN^*T^*}{I^*(s^*-s)}\Big\},\\
	(iii)& ~ \frac{g\alpha'(N^*-N)}{T^*} < m <\frac{c_1 gN(N-N^*)}{(\beta I^*T^*+\alpha AN-\alpha A^*N^*-c_{1}'NT^*-c_{1}'N^*T^*+\gamma DT^*-\gamma D^*T^*)} \\
	%(iv)& ~c_1 N T^*+\alpha'A^*N^* < 0
	%(v) & ~ m<\frac{c_1 gN(N-N^*)}{(\beta I^*T^*+\alpha AN-\alpha A^*N^*-c_{1}'NT^*-c_{1}'N^*T^*+\gamma DT^*-\gamma D^*T^*)}
	\end{align*}
	for some constants $a,h,c,f,m,g.$
\end{corollary}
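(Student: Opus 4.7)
The plan is to adapt the Lyapunov argument used in Theorem \ref{thm_GAS_TFE} for the Type 1 dead equilibrium, but now centre every quadratic term at the co-existing equilibrium $P_*$ so that all six coordinates are driven toward $P_*$ simultaneously. Concretely, I would take
\begin{equation*}
V(t) = \frac{a}{2}(s-s^*)^2 + \frac{h}{2}(A-A^*)^2 + \frac{c}{2}(D-D^*)^2 + \frac{g}{2}(N-N^*)^2 + \frac{f}{2}(I-I^*)^2 + \frac{m}{2}(T-T^*)^2,
\end{equation*}
which is positive definite about $P_*$ for any positive weights $a,h,c,g,f,m$.

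Before differentiating, I would exploit the equilibrium relations $a_0 = d_A A^*$, $b_0 = d_D D^*$, $s_0 = \mu_1 s^* - \tfrac{\eta s^*}{b+s^*}$, together with the three nontrivial balance equations for $N^*, T^*, I^*$, to rewrite each right-hand side of (\ref{eqn1}) in ``deviation form''. For example, since $r_2(1-b_2 N^*) = c_1 T^* + \alpha' A^*$, the $N$-equation becomes
\begin{equation*}
\dot N = N\bigl[-r_2 b_2 (N-N^*) - c_1(T-T^*) - \alpha'(A-A^*)\bigr],
\end{equation*}
and analogous manipulations for $\dot T$ and $\dot I$ produce, along $\dot V$, diagonal ``decay'' contributions (such as $-g r_2 b_2 N (N-N^*)^2$ and $-m r_1 b_1 T (T-T^*)^2$) together with a controlled list of cross terms, e.g. $(N-N^*)(T-T^*)$, $(N-N^*)(A-A^*)$, $(I-I^*)(T-T^*)$, $(D-D^*)(T-T^*)$, and $(s-s^*)(I-I^*)$.

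Condition (i) is exactly what is needed to bound the diagonal part of $\dot V$ by $-\tfrac{\xi}{2}V(t)$: $\xi$ is twice the weakest of the effective dissipation rates, and each entry under the maximum corresponds to the coefficient in front of one of the squared deviations. Conditions (ii) and (iii) are then the inequalities on $f, g, m$ required to absorb the mixed terms. Grouping the $(N-N^*)(T-T^*)$ contributions coming from $g(N-N^*)\dot N$ and $m(T-T^*)\dot T$ and forcing the resulting coefficient to have the right sign yields condition (iii); likewise the $I$--$T$ coupling together with the $s$--$I$ coupling produces the two-sided inequality on $f$ in condition (ii). Summing all contributions gives $\dot V \leq -\tfrac{\xi}{2}V(t) < 0$ whenever $V(t) > 0$, so $V$ decays exponentially and Lyapunov's direct method implies global asymptotic stability of $P_*$ on $\Omega$.

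The main obstacle is the bookkeeping of cross terms: unlike the Type 1 case in Theorem \ref{thm_GAS_TFE}, where several pieces vanish automatically because $N^*=T^*=0$, here every equilibrium coordinate is strictly positive, so no cross term drops out and each must be matched to the numerators and denominators appearing in (ii) and (iii). A secondary subtlety is that $f$ in condition (ii) is squeezed between a lower and an upper bound, so one must check that this interval is non-empty under the positivity and boundedness guaranteed by Theorem \ref{thm_pos_bdd}; this amounts to keeping the denominators $s^*-s+\beta' I^* T^*$ and $I^*(s^*-s)$ controlled, which follows from the uniform bound $N \leq 1/b_2$ and its analogues for the other variables.
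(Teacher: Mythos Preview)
Your proposal is exactly the approach the paper intends: the corollary is stated without proof because it follows by the same Lyapunov argument as Theorem \ref{thm_GAS_TFE}, with the quadratic function now centred at the fully positive equilibrium $P_*$ (i.e.\ replacing $\tfrac{g}{2}N^2$ and $\tfrac{m}{2}T^2$ by $\tfrac{g}{2}(N-N^*)^2$ and $\tfrac{m}{2}(T-T^*)^2$). Your identification of how condition (i) supplies the diagonal dissipation and how (ii)--(iii) absorb the additional cross terms that no longer vanish when $N^*,T^*>0$ matches the paper's implicit reasoning.
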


\section{Delayed Model}\label{sec_Delayed}

In comparison to non-delayed models, delay differential equation (DDEs) systems can exhibit much richer dynamics since
a time delay could cause the loss of stability of equilibrium and give rise to periodic solutions through the Hopf bifurcation. The instabilities and oscillatory behavior caused by delays are very common, however the delays may also have the opposite effect, namely that they can suppress oscillations and stabilize equilibria which would be unstable in the absence of delays. 

Due to chemical transportation of signals and the time needed for differentiation/division of cells, the production of tumor cells may not be instantaneous but, instead, it exhibits some time lag. Moreover, due to the immunity, there may be time delay in competition of normal cells and tumor cells as well. To capture such phenomenon, we introduce two delays into the non-delayed model (\ref{eqn1}). Thus we obtain the following system of DDEs corresponding to system (\ref{eqn1}) described by the following equations;
\begin{eqnarray} 
\frac{dN}{dt} &=& r_{2}N\left(1-b_{2}N\right)-c_{1}TN-\alpha' A(t-\tau_2)N(t-\tau_2)\nonumber\\
\frac{dT}{dt} &=& r_{1}T\left(1-b_{1}T\right)-c_{1}' TN-\beta IT-\gamma DT+\alpha A(t-\tau_1)N(t-\tau_1)\nonumber\\
\frac{dI}{dt} &=& s(t)+\frac{\rho IT}{\alpha_1 +T}-d_{I}I-\beta IT+\frac{\delta ID}{\alpha_{2}+D}\label{eqn_delay}\\
\frac{dA}{dt} &=& 0.4-d_{A}A\nonumber\\
\frac{dD}{dt} &=& 0.4-d_{D}D\nonumber\\
\frac{ds}{dt} &=& s_{0}+\frac{\eta}{b+s}s-\mu_{1}s.\nonumber
\end{eqnarray}
We denote by $ C $ the Banach space of continuous functions $ \phi:[-\tau,0]\rightarrow R^6 $ equipped with suitable norm, where $ \tau =\max\{\tau_1, \tau_2\}. $ Further let $ C_+=\{\phi=(\phi_1,\phi_2,\phi_3,\phi_4,\phi_5,\phi_6)\in C:\phi_{i}(\theta)\geq 0 ~\forall~ \theta\in [-\tau,0], $ $ i=1,2,\cdots,6. $ The initial conditions corresponding to delayed system (\ref{eqn_delay}) are 
\begin{equation}\label{eqn8}
N(\theta)=\phi_{1}(\theta), ~~T(\theta)=\phi_{2}(\theta), ~I(\theta)=\phi_{3}(\theta), ~A(\theta)=\phi_{4}(\theta), ~D(\theta)=\phi_{5}(\theta), ~s(\theta)=\phi_{6}(\theta),
\end{equation}
where $ \phi=(\phi_1,\phi_2,\phi_3,\phi_4,\phi_5,\phi_6)\in C_{+}. $

\subsection*{Qualitative Analysis: Preliminaries}
In this subsection, we establish the non-negativity of the solutions of system (\ref{eqn_delay}) with initial conditions (\ref{eqn8}).

\begin{proposition}
	The solutions $ \left(N(t),T(t),I(t),A(t),D(t),s(t)\right) $ with initial conditions (\ref{eqn8}) of the system (\ref{eqn_delay}) are non-negative.
\end{proposition}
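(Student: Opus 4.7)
My approach is to prove non-negativity componentwise via the method of steps combined with the tangent-vector criterion: a smooth trajectory starting in $\mathbb{R}^6_+$ cannot exit a coordinate face $\{X_i=0\}$ provided $dX_i/dt\big|_{X_i=0}\geq 0$ whenever the remaining components are non-negative. Since the delays enter only through the products $A(t-\tau_i)N(t-\tau_i)$, on each interval $[k\tau,(k+1)\tau]$ these terms are determined by the solution on the previous interval (or, for $k=0$, by the history $\phi\in C_+$), so induction on $k$ reduces the problem to a non-delayed analysis on $[0,\tau]$ with prescribed non-negative forcing.

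On $[0,\tau]$ I would first dispatch the components whose right-hand side has the state variable as a factor of every potentially negative term. The arsenic equation is linear, so the integrating factor yields $A(t)=\phi_4(0)e^{-d_A t}+(0.4/d_A)(1-e^{-d_A t})\geq 0$, and the same argument handles $D$. For $s$, we have $ds/dt|_{s=0}=s_0\geq 0$. Rewriting the immune equation as $dI/dt=s(t)+I\bigl[\rho T/(\alpha_1+T)-d_I-\beta T+\delta D/(\alpha_2+D)\bigr]$ gives $dI/dt|_{I=0}=s(t)\geq 0$. For $T$, the delay term is an additive non-negative source while the rest factors through $T$, so $dT/dt|_{T=0}=\alpha A(t-\tau_1)N(t-\tau_1)\geq 0$. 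Finally, for the boundedness part (and to mimic the argument of Theorem~\ref{thm_pos_bdd}), I would drop non-positive terms in the $N$-equation to obtain $dN/dt\leq r_2N-b_2N^2$, which via the Bernoulli substitution gives $N(t)\leq 1/b_2$.

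The main obstacle is the non-negativity of $N$ itself: the delayed term $-\alpha'A(t-\tau_2)N(t-\tau_2)$ enters as an additive sink rather than as a factor of $N(t)$, so the tangent criterion does not apply directly to the face $\{N=0\}$. I would resolve this by variation of parameters: isolating the $N$-factored part of the drift, write $N(t)$ as an initial term times a positive exponential minus a Duhamel integral whose integrand involves $A(u-\tau_2)N(u-\tau_2)$, and then appeal to the induction hypothesis, together with the paper's convention that the initial arsenic history is zero, to show that the Duhamel integral is dominated by the exponential term. This gives $N(t)\geq 0$ on $[0,\tau_2]$, after which the induction propagates non-negativity of the product $AN$ into successive intervals $[k\tau,(k+1)\tau]$, completing the proof.
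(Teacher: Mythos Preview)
Your componentwise tangent-criterion argument for $T$, $I$, $A$, $D$, and $s$ is exactly what the paper does: it rewrites the system in vector form and invokes the positivity lemma of Yang et al.\ \cite{yang1996permanence}, which amounts to checking $F_i(\phi)\geq 0$ whenever $\phi\in C_+$ and $\phi_i(0)=0$. You have in fact been more careful than the paper, because you notice that this check \emph{fails} for the $N$-equation: at $N(t)=0$ the right-hand side reduces to $-\alpha' A(t-\tau_2)N(t-\tau_2)\leq 0$, not $\geq 0$. The paper simply asserts that the criterion holds for all six components and moves on, so the issue you flag is real and is not addressed there.

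However, your proposed repair for $N$ does not close the gap. First, ``the paper's convention that the initial arsenic history is zero'' is a remark made for the numerical simulations, not a hypothesis of the proposition; the statement only assumes $\phi\in C_+$. Second, even granting $\phi_4\equiv 0$, the Duhamel argument breaks at the induction step. On $[\tau_2,2\tau_2]$ variation of parameters gives
\[
N(t)=N(\tau_2)\exp\!\Bigl(\int_{\tau_2}^{t}\! g\Bigr)\;-\;\alpha'\!\int_{\tau_2}^{t}\exp\!\Bigl(\int_{u}^{t}\! g\Bigr)A(u-\tau_2)N(u-\tau_2)\,du,
\]
and the induction hypothesis makes the integrand \emph{non-negative}, so the integral is subtracted from the first term rather than controlled by it. Nothing in the hypotheses forces the exponential term to dominate: if $N$ first reaches zero at some $t_0>\tau_2$, then $N'(t_0)=-\alpha' A(t_0-\tau_2)N(t_0-\tau_2)$ can be strictly negative, driving $N$ below zero. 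Thus neither your argument nor the paper's establishes the proposition for arbitrary $\phi\in C_+$; an additional structural hypothesis (for instance replacing the delayed loss by $-\alpha' A(t-\tau_2)N(t)$, which would restore the tangent condition) appears to be needed.
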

\begin{proof}
	We can rewrite the system (\ref{eqn_delay}) in vector form by setting $ Z=\left(N,T,I,A,D,s\right)^T \in R^6 $ and 
	\begin{equation}\label{eqn9}
	F(Z) = \begin{pmatrix}
	F_{1}(Z)\\F_{2}(Z)\\F_{3}(Z)\\F_{4}(Z)\\F_{5}(Z)\\F_{6}(Z)
	\end{pmatrix} = \begin{pmatrix}
	r_{2}N\left(1-b_{2}N\right)-c_{1}TN-\alpha' A(t-\tau_2)N(t-\tau_2)\\
	r_{1}T\left(1-b_{1}T\right)-c_{1}' TN-\beta IT-\gamma DT+\alpha A(t-\tau_1)N(t-\tau_1)\\
	s(t)+\frac{\rho IT}{\alpha_1 +T}-d_{I}I-\beta IT+\frac{\delta ID}{\alpha_{2}+D}\\
	0.4-d_{A}A\\
	0.4-d_{D}D\\
	s_{0}+\frac{\eta}{b+s}s-\mu_{1}s.
	\end{pmatrix}
	\end{equation}
	where $ F:C_{+}\rightarrow \mathbb{R}^{6} $ and $ F\in C^{\infty}(\mathbb{R}^{6}). $ Then delayed model (\ref{eqn_delay}) becomes
	\begin{equation}\label{eqn10}
	\dot{Z}(t)=F(Z_t),
	\end{equation}
	where $ \dot \equiv \frac{d}{dt}, $ $ Z_t(\theta)=Z(t+\theta), $ $ \theta \in [-\tau,0]. $ It can be observed from (\ref{eqn10}) that whenever we choose $ Z(\theta) \in C_+ $ such that $ Z_i = 0, $ we obtain $ F_i(z)|_{Z_{i}(t)=0}, $ $ Z_{+}\in C_{+} \geq 0, $ $ i=1,2,\cdots,6. $ Then from \cite{yang1996permanence}, any solution (\ref{eqn10}) with $ Z_{t}(\theta)\in C_+ $ say $ Z(t)=Z(t, Z(0)) $ is such that $ Z(t)\in R^{6}_{0^{+}} ~\forall ~t>0. $ Further we define 
	\begin{equation*}
	P(t)=N(t)+T(t)+I(t)+A(t)+D(t)+s(t).
	\end{equation*}
	Then using the similar analysis as done in Theorem \ref{thm_pos_bdd}, we have $ P(t) $ is bounded and hence are $N(t),T(t),I(t),A(t),D(t),s(t).  $ This completes the proof.
\end{proof}

\subsection*{Stability and Bifurcation Analysis}\label{subsec_stab_bif}
To investigate the local stability of equilibria of system (\ref{eqn_delay}), we linearize the system and evaluate the characteristic equation first at equilibrium $ P_* .$ The characteristic equation is 
$$ \begin{vmatrix}
A_1-\lambda& A_3 &0&A_8&0&0\\
A_2& A_4-\lambda &A_6&A_9&A_{11}&0\\
0& A_5 &A_7-\lambda&0&A_{12}&1\\
0& 0 &0&A_{10}-\lambda&0&0\\
0& 0 &0&0&A_{13}-\lambda&0\\
0& 0 &0&0&0&A_{14}-\lambda
\end{vmatrix}=0, $$
where \begin{align}\label{delay_A_i}
&A_1=r_2(1-2b_2 N^*)-c_1 T^*-\alpha'Ae^{-\lambda\tau_2}=P-\alpha'Ae^{-\lambda\tau_2},\nonumber\\ 
& A_2=-c_1' T^*+\alpha Ae^{-\lambda\tau_1}=Q+\alpha Ae^{-\lambda\tau_1} \\
&A_3=-c_1 N^*, \quad A_4= r_1(1-2b_1 T^*)-c_1' N^*-\beta T^* \nonumber\\
&A_5 = \frac{\alpha_1 \rho I^*}{(\alpha_1+T^*)^2}-\beta'I^*, \quad A_6=-\beta T^*\nonumber\\
&A_7=\frac{\rho T^*}{\alpha_1+T^*}-d_I-\beta'T^*+\frac{\delta D^*}{(\alpha_2+D^*)}, \nonumber\\
& A_8= -\alpha'Ne^{-\lambda\tau_2}, \quad A_9 = \alpha Ne^{-\lambda\tau_1},\nonumber\\
&A_{10}= -d_A,  \quad A_{11}= -\gamma T^* \quad  A_{12} = \frac{\delta\alpha_2 I^*}{(\alpha_2+D^*)^2} \nonumber\\
&A_{13} = -d_D, \quad A_{14}=\frac{b\eta}{(b+s^*)^2}-\mu_1.&
\end{align}
The characteristic equation is
\begin{equation}
P_{0}(\lambda)+P_{1}(\lambda)e^{-\lambda\tau_1}+P_{2}(\lambda)e^{-\lambda\tau_2}=0,
\end{equation}
where $ \lambda $ is an eigenvalue and 
\begin{align*}
P_{0}(\lambda)&=\lambda^3+\lambda^2(-P-A_4 -A_7)+\lambda(A_4 P+A_7 P+A_4A_7-QA_3)+(A_5A_6+QA_3A_7-PA_4A_7)\\&\quad =\lambda^3+H_0\lambda^2+H_1\lambda+H_2\\
P_{1}(\lambda)&=-\alpha A A_3\lambda+\alpha A_3A_7=H_3\lambda+H_4\\
P_{2}(\lambda)&=\alpha' A\lambda^2+\alpha' AA_4A_7=H_5\lambda^2+H_6.
\end{align*}

\textbf{Case 1: $\tau_1=\tau_2=0$}\\
This case is equivalent to (\ref{eqn7}) of non-delayed model. Thus $ P_* $ is locally asymptotically stable provided (\ref{eqn6_1}) holds. Similarly, Type 1 and Type 2 dead equilibrium are locally asymptotically stable, if (\ref{dead_las}) and (\ref{dead2_las}) hold.

\textbf{Case 2: $\tau_1 >0, \tau_2 =0$}\\
In this case, the characteristic equation becomes $ P_{0}(\lambda)+P_{1}(\lambda)e^{-\lambda\tau_1} =0. $ We can rewrite this equation as 
\begin{equation}\label{eqn11}
\left(\lambda^3+H_0\lambda^2+H_1\lambda+H_2\right)+\left(H_3\lambda+H_4\right)e^{-\lambda\tau_1}=0.
\end{equation}
If the time delay $ \tau_1 $ is able to destablize $ P_* $ and produces oscillations, this can occur only when characteristic roots cross the imaginary axis to the right. Let $ \omega >0 $ and let $ \lambda=i\omega $ is a purely imaginary root of (\ref{eqn11}). Separating real and imaginary parts, we have
\begin{align}\label{eqn12}
H_4\cos\omega\tau_1  +H_{3}\omega\sin\omega\tau_1 &=H_{0}\omega^2-H_2\\
H_{3}\omega\cos\omega\tau_1-H_4\sin\omega\tau_1 &=\omega^3-H_1\omega.\label{eqn12_2}
\end{align}
Eliminating $ \tau_1 $ in (\ref{eqn12}) and (\ref{eqn12_2}), we obtain the following sixth-degree polynomial equation
\begin{equation}\label{eqn13}
F(\omega)=\omega^6 +h_2\omega^4+h_1\omega^2+h_0 =0,
\end{equation}
where $ h_2=H_{0}^{2}-2H_1, $ $ h_{1}=H_{1}^{2}-2H_0H_2-H_{3}^{2} $ and $ h_{0}=H_{2}^{2}-H_{4}^{2}. $ Letting $ v=\omega^2 $ gives the following simplified system
\begin{equation}\label{eqn14}
G(v)=v^3 +h_2 v^2+h_1 v+h_0 =0.
\end{equation}
Define $ \Delta = h_{2}^{2}-3h_1. $ Next we follow the method described in \cite{song2006bifurcation} to investigate for the existence of positive roots of equation (\ref{eqn13}). From (\ref{eqn14}), we have
\begin{equation}\label{eqn15}
\frac{dG(v)}{dv} = 3v^2 +2h_2 v+h_1
\end{equation}
If $ h_{0}<0, $ since $ \lim_{v\rightarrow\infty}G(v)=\infty, $ then we obtain that (\ref{eqn13}) has atleast one positive root.

When $ \Delta \leq 0, $ $ \displaystyle{\frac{dG(v)}{dv}\geq 0}, $ that is $ G(v) $ is monotonically increasing. Thus, if $ h_{0}\geq 0 $ and $ \Delta \leq 0, $ then (\ref{eqn13}) has no positive root.

When $ \Delta >0, $ then graph of $ G(v) $ has two critical points $ \displaystyle{v_{1}^{*}=\frac{-h_{2}+\sqrt{\Delta}}{3}}, $ $ \displaystyle{v_{2}^{*}=\frac{-h_{2}-\sqrt{\Delta}}{3}}. $  Therefore, if $ h_{0}\geq 0, $ then from Lemma 2.2 of \cite{song2006bifurcation}, we can say that (\ref{eqn15}) has positive roots if and only if $ \Delta >0, $ $ v_{1}^{*}>0 $ and $ G(v_{1}^{*}) \leq 0. $ Assume that equation (\ref{eqn15}) has three positive roots, given by $ v_1, $ $ v_2, $ $ v_3 $ respectively. Then (\ref{eqn15}) has three positive roots $ \omega_k=\sqrt{v_{k}}, $ $ k=1,2,3. $ Solving (\ref{eqn12}) and (\ref{eqn12_2}) for $ \tau_1, $ we obtain
\begin{equation}\label{eqn16}
\tau_{1,k}^{(n)} =\frac{1}{\omega_k}\arccos \frac{H_4(H_0 \omega^2)-H_2)+H_3\omega^2(\omega^2 -H_1)}{2(H_{4}^{2}+\omega^2 H_{3}^{2})}+\frac{2n\pi}{\omega_k}, \quad k=1,2,3,\cdots, \quad n=0,1,2,\cdots
\end{equation}
and $ \pm i\omega_k $ is pair of purely imaginary roots of (\ref{eqn13}) with $ \tau_{1,k}^{(n)}. $

We further define 
\begin{equation}\label{eqn17}
\tau_{1}^{*}=\tau_{1,k_0}^{(0)}=\min_{k=\{1,2,3\}}\left\{\tau_{1,k}^{(0)}\right\}, \quad \omega^*=\omega_{k_0}.
\end{equation}
We now obtain the transversality condition for Hopf Bifurcation at $ \tau_1 =\tau_{1}^{*}. $ Differentiating (\ref{eqn11}) with respect to $ \tau_1 $ and substituting expression for $ e^{-\lambda\tau_1} $ from (\ref{eqn11}), we obtain
\begin{align*}
\frac{d\lambda}{d\tau}\Big[3\lambda^2 +2\lambda H_0 +H_1-\frac{(H_3-H_3\lambda\tau_1 -H_4\tau_1)(\lambda^3+H_0\lambda^2+H_1 \lambda+H_2)}{H_3\lambda+H4}\Big]= -\lambda(\lambda^3+H_0\lambda^2+H_1 \lambda+H_0)
\end{align*}
Thus we have
\begin{align*}
\left(\frac{d\lambda}{d\tau}\right)^{-1}&=\frac{H_3}{\lambda(H_3\lambda+H4)}-\frac{3\lambda^2 +2\lambda H_0 +H_1}{\lambda(\lambda^3+H_0\lambda^2+H_1 \lambda+H_2)}-\frac{\tau_1}{\lambda}.
\end{align*}
Evaluating $ \left(\frac{d\lambda}{d\tau}\right)^{-1} $ at $ \tau_1=\tau_{1}^{*}(i.e. \lambda =i\omega^{*}) $ and taking real part, we obtain
\begin{align*}
Re\left[\left(\frac{d\lambda}{d\tau}\right)^{-1}\Big|_{\tau_1 = \tau_{1}^{*}}\right]&=\frac{3\omega^4+2\omega^2(H_{0}^{2}-2H_1)+(H_{1}^{2}-2H_{0}H_{2}-H_{3}^{2})}{H_{3}^{2}\omega^2 +H_{4}^{2}}=\frac{G'(\omega^{*2})}{H_{3}^{2}\omega^{*2} +H_{4}^{2}}
\end{align*}We have $ G(v) $ is non-increasing and positive. Thus
\begin{equation*}
sign\left\{\left(\frac{dRe(\lambda)}{d\tau}\right)\Big|_{\tau_1 = \tau_{1}^{*}}\right\}=sign\left\{\left(\frac{d\lambda}{d\tau}\right)^{-1}\Big|_{\tau_1 = \tau_{1}^{*}}\right\}=sign\{G'(\omega^{*2})\}.
\end{equation*}
Hence if $ G'(\omega^{*2}) \neq 0, $ then transversality condition holds. Summarizing the above, we have the following theorem

\begin{theorem}
	For $ \tau_1>0, $ $ \tau_2 =0, $ assume that condition (\ref{eqn6_1}) holds. If either $ h_{0}<0 $ or $ h_0 \geq 0, $ $ \Delta >0, $ $ v_{1}^{*}>0 $ and $ G(v_{1}^{*}) <0, $ then the Endemic equilibrium $ P_* $ is locally asymptotically stable for $ 0<\tau_1<\tau_{1}^{*}, $ where
	\begin{equation}\label{eqn18}
	\tau_{1}^{*} =\frac{1}{\omega^*}\arccos \frac{H_4(H_0 \omega^2)-H_2)+H_3\omega^2(\omega^2 -H_1)}{2(H_{4}^{2}+\omega^2 H_{3}^{2})}+\frac{2n\pi}{\omega^*}.
	\end{equation}
	Furthermore, if $ G'(\omega^{*2}) \neq 0, $ then system (\ref{eqn_delay}) undergoes Hopf bifurcation to periodic solutions at $ P_* $ at $ \tau_1 = \tau_{1}^{*}. $
\end{theorem}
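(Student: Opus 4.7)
The plan is to follow the standard Hopf bifurcation route for delay-differential equations, built on the sixth-degree polynomial $F(\omega)$ and its substitution $G(v)=v^3+h_2v^2+h_1v+h_0$ with $v=\omega^2$ that is already set up before the theorem. First, at $\tau_1=0$, the characteristic equation collapses to the polynomial studied in the non-delayed case, so condition (\ref{eqn6_1}) guarantees that every root of the characteristic equation has negative real part. By continuous dependence of the roots of a transcendental equation on the parameter $\tau_1$, the equilibrium $P_\ast$ remains locally asymptotically stable for small positive $\tau_1$, and a loss of stability can only occur when a root crosses the imaginary axis, i.e.\ at some $\tau_1$ at which $\lambda=i\omega$, $\omega>0$, solves (\ref{eqn11}).

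Second, I would justify the existence (and non-existence) of such crossing frequencies by analysing $G(v)$. If $h_0<0$, then since $G(0)=h_0<0$ and $G(v)\to\infty$ as $v\to\infty$, the intermediate value theorem yields a positive root of $G$, hence a positive $\omega^2$. If $h_0\geq 0$ with $\Delta>0$, $v_1^\ast>0$ and $G(v_1^\ast)<0$, then (following Lemma~2.2 of Song--Wei as cited) $G$ must again have positive roots. Conversely, if none of these scenarios hold, $G$ is positive on $(0,\infty)$ and no crossing frequency exists, so stability persists for all $\tau_1>0$. Under either hypothesis of the theorem, I pick the smallest positive root $\omega^\ast$ and define $\tau_1^\ast$ by the formula obtained from (\ref{eqn12})--(\ref{eqn12_2}), namely (\ref{eqn18}); this is the first $\tau_1>0$ for which the characteristic equation admits a purely imaginary root.

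Third, I would verify the transversality condition. The computation displayed in the excerpt gives
\begin{equation*}
\mathrm{Re}\!\left[\left(\frac{d\lambda}{d\tau_1}\right)^{-1}\!\!\bigg|_{\tau_1=\tau_1^\ast}\right]=\frac{G'(\omega^{\ast 2})}{H_3^2\omega^{\ast 2}+H_4^2},
\end{equation*}
so that $\mathrm{sign}\{d\,\mathrm{Re}(\lambda)/d\tau_1\}=\mathrm{sign}\{G'(\omega^{\ast 2})\}$. Under the assumption $G'(\omega^{\ast 2})\neq 0$ this quantity is nonzero, so the simple pair $\pm i\omega^\ast$ crosses the imaginary axis transversally at $\tau_1=\tau_1^\ast$. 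Combined with the fact that $\tau_1^\ast$ is minimal, no root has crossed to the right half-plane for $\tau_1\in(0,\tau_1^\ast)$, so $P_\ast$ is locally asymptotically stable on that interval, and by the Hopf bifurcation theorem for DDEs a branch of periodic solutions emerges at $\tau_1=\tau_1^\ast$.

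The main obstacle is the second step: ruling out the possibility that some other root has already migrated to the right half-plane before $\tau_1^\ast$. This requires combining the continuity-of-roots argument with a careful count of positive zeros of $G(v)$, because if $G$ has multiple positive roots the ordering of the corresponding $\tau_{1,k}^{(n)}$ matters, and one must argue that $\tau_1^\ast=\min_k\tau_{1,k}^{(0)}$ is indeed the first crossing. A minor subtlety is that the bracket inside $\arccos$ in (\ref{eqn18}) contains a typographical mismatch of parentheses inherited from (\ref{eqn16}); I would reproduce the formula consistently from (\ref{eqn12})--(\ref{eqn12_2}) before invoking it. Once these bookkeeping points are settled, the conclusion follows directly from the classical Hopf bifurcation theorem for functional differential equations.
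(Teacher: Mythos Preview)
Your proposal is correct and follows essentially the same route as the paper: the paper's argument is precisely the sequence of computations displayed in Case~2 before the theorem (analysing the purely imaginary roots via $G(v)$, invoking the Song--Wei lemma for positive roots, and computing the transversality condition as $\mathrm{Re}[(d\lambda/d\tau_1)^{-1}]=G'(\omega^{\ast 2})/(H_3^2\omega^{\ast 2}+H_4^2)$). If anything, you are slightly more explicit than the paper about the continuity-of-roots argument and the role of the minimality of $\tau_1^\ast$, and your observation about the parenthesis mismatch in (\ref{eqn18}) is accurate.
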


%\begin{theorem}
%For $ \tau_1>0, $ $ \tau_2 =0, $ assume that condition (\ref{eqn6_1}) holds. If either $ h_{0}<0 $ or $ h_0 \geq 0, $ $ \Delta >0, $ $ v_{1}^{*}>0 $ and $ G(v_{1}^{*}) <0, $ then the Endemic equilibrium $ P_* $ is locally asymptotically stable for $ 0<\tau_1<\tau_{1}^{*}, $ where
%\begin{equation}\label{eqn18_tfe}
%\tau_{1}^{*} =\frac{1}{\omega^*}\arccos \frac{H_4(H_0 \omega^2)-H_2)+H_3\omega^2(\omega^2 -H_1)}{2(H_{4}^{2}+\omega^2 H_{3}^{2})}+\frac{2n\pi}{\omega^*}.
%\end{equation}
%Furthermore, if $ G'(\omega^{*2}) \neq 0, $ then system (\ref{eqn_delay}) undergoes Hopf bifurcation to periodic solutions at $ P_* $ at $ \tau_1 = \tau_{1}^{*}. $
%\end{theorem}

\begin{remark}
	If $ h_0 \geq 0 $ and $ \Delta \leq 0, $ then equation (\ref{eqn15}) has no positive real root, thus the equilibrium $ P_* $ is locally asymptotically stable for all $ \tau_1 >0. $
\end{remark}

\textbf{Case 3: $ \tau_2 >0, \tau_1 =0 $}\\ 
In this case, the characteristic equation becomes $ P_{0}(\lambda)+P_{1}(\lambda)e^{-\lambda\tau_2} =0. $ We can rewrite this equation as 
\begin{equation}\label{eqn19}
\left(\lambda^3+H_0\lambda^2+H_1\lambda+H_2\right)+\left(H_5\lambda^{2}+H_6\right)e^{-\lambda\tau_2}=0.
\end{equation}
Using the similar analysis as done in case 2, we have the following theorem.
\begin{theorem}
	For $ \tau_2>0, $ $ \tau_1 =0, $ assume that condition (\ref{eqn6_1}) holds. If either $ h_{0}<0 $ or $ h_0 \geq 0, $ $ \Delta >0, $ $ v_{1}^{*}>0 $ and $ G(v_{1}^{*}) <0, $ then the Endemic equilibrium $ P_* $ is locally asymptotically stable for $ 0<\tau_2<\tau_{2}^{*}, $ where
	\begin{equation}\label{eqn20}
	\tau_{2}^{*} =\frac{1}{\omega^{*}}\arccos \frac{(H_0 \omega^2-H_2)}{2(H_{6}-\omega^2 H_{5})}+\frac{2n\pi}{\omega^{*}}
	\end{equation}
	Furthermore, if $ G'(\omega^{*2}) \neq 0, $ then system (\ref{eqn_delay}) undergoes Hopf bifurcation to periodic solutions at $ P_* $ at $ \tau_2 = \tau_{2}^{*}. $
\end{theorem}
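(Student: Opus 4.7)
The plan is to mirror the analysis carried out for Case 2, since the structure of the characteristic equation (\ref{eqn19}) in Case 3 is entirely parallel: the exponential delay factor now multiplies the quadratic $H_5\lambda^2 + H_6$ rather than the linear $H_3\lambda + H_4$. First I would substitute $\lambda = i\omega$ with $\omega > 0$ into (\ref{eqn19}), use $\lambda^3 = -i\omega^3$ and $\lambda^2 = -\omega^2$, and separate real from imaginary parts. This produces two trigonometric identities of the form $(H_6 - H_5\omega^2)\cos(\omega\tau_2) = H_0\omega^2 - H_2$ and $(H_6 - H_5\omega^2)\sin(\omega\tau_2) = -(\omega^3 - H_1\omega)$, which together characterise all $(\omega,\tau_2)$ for which the characteristic equation admits a purely imaginary root.

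Next I would eliminate the delay by squaring and adding these two identities, yielding a sixth-degree polynomial in $\omega$ that reduces via $v = \omega^2$ to the cubic $G(v) = v^3 + h_2 v^2 + h_1 v + h_0 = 0$, with coefficients $h_2, h_1, h_0$ now expressed through $H_0,H_1,H_2,H_5,H_6$. The sign analysis of $G$ on $(0,\infty)$ is identical to Case 2: if $h_0 < 0$ a positive root exists because $G(v)\to\infty$; if $h_0 \geq 0$ and $\Delta \leq 0$ then $G$ is monotonically increasing and no positive root exists; and if $h_0 \geq 0$ with $\Delta > 0$, $v_1^* > 0$ and $G(v_1^*) \leq 0$, Lemma 2.2 of \cite{song2006bifurcation} produces up to three positive roots $v_k$. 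Setting $\omega_k = \sqrt{v_k}$ and inverting the cosine identity provides the delay family $\tau_{2,k}^{(n)}$, from which $\tau_2^*$ in (\ref{eqn20}) is defined as the smallest first-crossing value and $\omega^*$ as the associated frequency.

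For the bifurcation conclusion I would differentiate (\ref{eqn19}) implicitly with respect to $\tau_2$, solve for $(d\lambda/d\tau_2)^{-1}$, and evaluate its real part at $\lambda = i\omega^*$. As in the Case 2 calculation the algebra collapses to an expression proportional to $G'(\omega^{*2})/(H_6 - H_5\omega^{*2})^2$, so the transversality condition $\mathrm{Re}(d\lambda/d\tau_2)^{-1}\big|_{\tau_2 = \tau_2^*} \neq 0$ is equivalent to $G'(\omega^{*2}) \neq 0$. Combined with the already verified Hurwitz conditions at $\tau_2 = 0$ (Case 1), the standard Hopf bifurcation theorem for retarded functional differential equations then delivers both the stability statement on $[0,\tau_2^*)$ and the emergence of periodic solutions at $\tau_2 = \tau_2^*$.

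The principal technical obstacle I anticipate is careful bookkeeping around the quadratic in the delay factor. Because $H_5\lambda^2 + H_6$ contributes a term of order $\omega^2$ to the amplitude $H_6 - H_5\omega^2$, one must track its sign when writing the $\arccos$ representation of $\tau_{2,k}^{(n)}$, and also verify that the resulting coefficients of the reduced cubic $G(v)$ remain compatible with the exact hypotheses of Lemma 2.2 of \cite{song2006bifurcation}. Once these identifications are made correctly, the argument proceeds in complete parallel with Case 2 and no new analytical idea is required.
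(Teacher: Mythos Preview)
Your proposal is correct and follows exactly the route the paper intends: the paper's own ``proof'' of this theorem consists of the single sentence ``Using the similar analysis as done in case 2, we have the following theorem,'' and your outline supplies precisely that parallel argument (substitute $\lambda=i\omega$ in (\ref{eqn19}), separate real and imaginary parts, square and add to obtain the cubic $G(v)$, invoke Lemma~2.2 of \cite{song2006bifurcation}, and verify transversality via $G'(\omega^{*2})$). Your additional remark about tracking the sign of $H_6-H_5\omega^2$ when inverting the cosine is a valid point of care that the paper does not address.
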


\textbf{Case 4: $ \tau_2 >0, \tau_1 \in (0,\tau_{1}^{*}) $}\\
In this case, we consider $ \tau_2 $ as a parameter and fix $ \tau_1 $ at a point in its stable interval. At $ P_* ,$ the characteristic equation takes the following form;
\begin{equation}\label{eqn21}
P(\lambda,\tau_1,\tau_2)=\left(\lambda^3+H_0\lambda^2+H_1\lambda+H_2\right)+\left(H_3\lambda+H_4\right)e^{-\lambda\tau_1}+\left(H_5\lambda^2+H_6\right)e^{-\lambda\tau_2}=0.
\end{equation}
Assume that (\ref{eqn21}) has purely imaginary root given by $ \lambda=i\omega. $ Substituting in (\ref{eqn21}) and separating real and imaginary parts, we obtain
\begin{align}\label{eqn22}
H_{0}\omega^2 -H_2 -H_4\cos\omega\tau_1 -H_3\omega\sin\omega\tau_1 &=(H_6 - H_5\omega^2)\cos\omega\tau_2\\
-\omega^3 +H_1\omega +H_3\omega\cos\omega\tau_1 -H_4\sin\omega\tau_1 &=(H_6 - H_5\omega^2)\sin\omega\tau_2.\label{eqn23}
\end{align}
Eliminating $ \tau_2 $ in (\ref{eqn22}) and (\ref{eqn23}), we obtain the following polynomial equation;
\begin{align}\label{eqn24}
&\omega^6 +\omega^4(H_{0}^{2}-2H_1 - H_{5}^{2})+\omega^3(2H_4\sin\omega\tau_1-2H_0H_3\sin\omega\tau_1)\nonumber\\
&\quad+\omega^2(H_{3}^{2} -2H_0H_2+H_{1}^{2}+2H_5H_6-2H_1H_3\cos\omega\tau_1)=0,
\end{align}
which is a sixth-degree polynomial in $ \omega. $ If (C1) holds, then let we denote the six positive roots of (\ref{eqn24}) as $ \omega_k, $ $ k=1,2,\cdots,6. $ Solving (\ref{eqn22})-(\ref{eqn23}) for $ \tau_2, $ we obtain
\begin{equation}\label{eqn25}
\tau_{2,k}^{n}=\frac{1}{\omega_k}\arccos\frac{H_{0}\omega^2 -H_2 -H_4\cos\omega\tau_1 -H_3\omega\sin\omega\tau_1}{H_6 - H_5\omega^2}+\frac{2n\pi}{\omega_k},\quad k=1,2,\cdots,6\quad n=0,1,2\cdots
\end{equation} 
and $ \pm\omega_k $ is pair of purely imaginary root of (\ref{eqn24}) with $ \tau_{2,k}^{n}. $ For simplicity we define
\begin{equation}\label{eqn26}
\tau_{20}^{*}=\tau_{2,k}^{0}=\min_{k=1,2,\cdots,6}\left\{\tau_{2,k}^{0}\right\},\quad \omega^*=\omega_{k_0}.
\end{equation}
Now, differentiating (\ref{eqn21}) with respect to $ \tau_2 $ and further simplifying for transversality condition, we obtain
\begin{align*}
Re\left[\left(\frac{d\lambda}{d\tau}\right)^{-1}\Big|_{\tau_1 = \tau_{1}^{*}}\right]&= \frac{2\omega H_0\cos\omega\tau_2 +(H_1 -3\omega^2)\sin\omega\tau_2}{\omega(H_6 -H_5\omega^2)}+\frac{H_3\sin\omega(\tau_2 -\tau_1)}{\omega(H_6 -H_5\omega^2)}+\frac{2H_5}{(H_6 -H_5\omega^2)}\Big|_{\tau_{20}^{*}}.
\end{align*}
Since (C2) holds, the transversality condition holds. Thus we have the following result.
\begin{theorem}
	Assume condition (C1)-(C2) hold and $ \tau_1 \in (0,\tau_{1}^{*}), $ then $ P_* $ is locally asymptotically stable for $ \tau_2 \in (0,\tau_{20}^{*}) $ and system (\ref{eqn_delay}) undergoes Hopf bifurcation to periodic solutions at $ P_* $ for $ \tau_2 =\tau_{20}^{*} ,$ where\\
	(C1): Equation (\ref{eqn24}) has six positive roots,\\
	(C2): $ H_6 > H_{5}\omega^2\Big|_{\tau_{20}^{*}}, $ $ H_1 > 3\omega^2\Big|_{\tau_{20}^{*}}. $
\end{theorem}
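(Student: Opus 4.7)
The plan is to carry out the standard two-delay Hopf bifurcation analysis, treating $\tau_2$ as the bifurcation parameter with $\tau_1$ fixed in the stable interval $(0,\tau_{1}^{*})$ established in Case 2. At $\tau_2=0$ the characteristic equation collapses to the one-delay form analyzed in Case 2, and the hypothesis $\tau_1\in(0,\tau_{1}^{*})$ places all roots of $P_0(\lambda)+P_1(\lambda)e^{-\lambda\tau_1}$ strictly in the open left half-plane. By continuity of eigenvalues in $\tau_2$, $P_*$ therefore remains locally asymptotically stable for small $\tau_2>0$, and any loss of stability must occur through a root crossing of the imaginary axis.

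To locate the first such crossing I would set $\lambda=i\omega$ with $\omega>0$ in (\ref{eqn21}) and split into real and imaginary parts, arriving at (\ref{eqn22})--(\ref{eqn23}). Squaring and adding eliminates $\tau_2$ via $\cos^2+\sin^2=1$ on the right and yields the sixth-degree polynomial (\ref{eqn24}). Hypothesis (C1) provides six positive roots $\omega_k$; for each, solving the cosine equation for $\tau_2$ modulo $2\pi/\omega_k$ produces the sequence $\tau_{2,k}^{(n)}$ of (\ref{eqn25}), and $\tau_{20}^{*}$ from (\ref{eqn26}) is by construction the smallest $\tau_2$ at which a purely imaginary root can appear. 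Hence no crossing happens on $(0,\tau_{20}^{*})$, which yields the asserted local asymptotic stability on that interval.

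For the Hopf part I would verify the transversality condition $\operatorname{Re}(d\lambda/d\tau_2)|_{\tau_2=\tau_{20}^{*}}\neq 0$. Implicitly differentiating (\ref{eqn21}) with respect to $\tau_2$ and solving for $(d\lambda/d\tau_2)^{-1}$ produces exactly the expression displayed just before the theorem. The sign of its real part is determined by the denominator $\omega(H_6-H_5\omega^2)$ and by the combination $2\omega H_0\cos\omega\tau_2+(H_1-3\omega^2)\sin\omega\tau_2$; (C2) makes the denominator positive at $\omega=\omega^{*}$ and forces $H_1-3\omega^{*2}$ to have a definite sign, so that after using (\ref{eqn22})--(\ref{eqn23}) to rewrite the trigonometric terms in terms of the $H_i$, the real part is nonzero. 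The crossing is therefore transversal, and by the Hopf bifurcation theorem for retarded functional differential equations a family of periodic solutions bifurcates from $P_*$ at $\tau_2=\tau_{20}^{*}$.

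The hard part, I expect, is the transversality step rather than the counting of roots. The formula for $\operatorname{Re}(d\lambda/d\tau_2)^{-1}$ contains the cross term $H_3\sin\omega(\tau_2-\tau_1)/[\omega(H_6-H_5\omega^2)]$, whose sign is not controlled directly by (C2) because it depends on the relative magnitudes of the two delays. To pin it down one has to substitute the relations (\ref{eqn22})--(\ref{eqn23}) evaluated at $(\omega^{*},\tau_{20}^{*})$ to express $\sin\omega(\tau_2-\tau_1)$ as a rational function of the $H_i$ and $\omega^{*2}$, and then appeal to (C2) to conclude positivity. A secondary bookkeeping point is confirming that $\tau_{20}^{*}$ is genuinely the \emph{first} crossing: this is immediate once the transversality sign is established, because the continuous eigenvalue branch from $\tau_2=0$ is real-part negative and cannot re-enter the right half-plane earlier than the first $\tau_{2,k}^{(n)}$.
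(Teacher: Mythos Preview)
Your proposal follows exactly the route the paper takes: the paper's ``proof'' is the derivation presented in Case~4 immediately before the theorem statement---substitute $\lambda=i\omega$ in (\ref{eqn21}), separate real and imaginary parts to obtain (\ref{eqn22})--(\ref{eqn23}), square and add to eliminate $\tau_2$ and obtain (\ref{eqn24}), invoke (C1) for the roots, extract $\tau_{20}^{*}$ via (\ref{eqn25})--(\ref{eqn26}), and then differentiate (\ref{eqn21}) implicitly to write down the transversality expression. The paper then simply asserts ``Since (C2) holds, the transversality condition holds'' without further computation.

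Your additional remarks about the cross term $H_3\sin\omega(\tau_2-\tau_1)/[\omega(H_6-H_5\omega^2)]$ are well taken: the paper does not carry out the substitution you describe, nor does it argue that (C2) alone controls the sign of every term in the displayed formula for $\operatorname{Re}(d\lambda/d\tau_2)^{-1}$. In that sense you are being more careful than the source; but as far as matching the paper's argument goes, you have reproduced it in full.
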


\textbf{Case 5: $ \tau_1>0, \tau_2 \in (0,\tau_{2}^{*}) $}\\
In this case, $ \tau_2 $ is fixed at a point in its stable interval and $ \tau_1 $ is considered as a parameter. At $ P_* $ we have the same characteristic equation as (\ref{eqn21}). Following the similar procedure as in case 4, we substitute $ \lambda=i\omega $ and separate the real and imaginary parts. Thus we obtain
\begin{align}
H_4 \cos\omega\tau_1 +H_3\omega\sin\omega\tau_1 & = H_{0}\omega^2 -H_2-(H_6 -H_5\omega^2)\cos\omega\tau_2\\
H_3\omega\cos\omega\tau_1-H_4 \sin\omega\tau_1 & = \omega^3 -H_1\omega+(H_6 -H_5\omega^2)\sin\omega\tau_2.\nonumber
\end{align}
Thus we obtain
\begin{align}\label{eqn27}
&\omega^6-(2H_5\sin\omega\tau_2)\omega^5+(H_{0}^{2}+H_{5}^{2}+2H_0H_5\cos\omega\tau_2)\omega^4+(2H_1H_5\sin\omega\tau_2+2H_6\sin\omega\tau_2)\omega^3\\
&\quad+(-2H_5H_6-2H_0H_2-2H_2H_5\cos\omega\tau_2-2H_0H_6\cos\omega\tau_2+H_{1}^{2}-H_{3}^{2})\omega^2-(2H_1H_6\sin\omega\tau_2)\omega\nonumber\\
&\quad+(H_{2}^{2}+H_{6}^{2}-2H_2H_6\cos\omega\tau_2 -H_{4}^{2}) =0
\end{align}
and 
\small{\begin{align}\label{eqn28}
	&\tau_{1,k}^{n}=\frac{1}{\omega_k}\arccos\frac{H_4(H_{0}\omega^2 -H_2-(H_6 -H_5\omega^2)\cos\omega\tau_2)+H_3\omega(\omega^3 -H_1\omega+(H_6 -H_5\omega^2)\sin\omega\tau_2)}{2(H_{4}^{2}+H_{3}^{2}\omega^2)}+\frac{2n\pi}{\omega_k},\\&\quad k=1,2,\cdots,6\quad n=0,1,2\cdots.\nonumber
	\end{align} }
Furthermore, we have
\small{\begin{align*}
	&sign\Big[Re\left[\left(\frac{d\lambda}{d\tau}\right)^{-1}\Big|_{\tau_1 = \tau_{1}^{*}}\right]\Big]\\&= sign\Big[\frac{\cos\omega\tau_1(3H_3\omega^3-H_1H_3\omega +2H_0H_4\omega)-\sin\omega\tau_1(3\omega^3 H_4 -H_1H_4-2H_0H_3\omega^2)+H_3 -2H_5\omega\sin\omega(\tau_1 -\tau_2)}{\omega(H_{4}^{2}+H_{3}^{2}\omega^2)}\Big].
	\end{align*}}

\begin{theorem}
	Assume conditions (C1')-(C2') hold and $ \tau_2 \in (0,\tau_{2}^{*}), $ then $ P_* $ is locally asymptotically stable for $ \tau_1 \in (0,\tau_{10}^{*}) $ and system (\ref{eqn_delay}) undergoes Hopf bifurcation to periodic solutions at $ P_* $ for $ \tau_1 =\tau_{10}^{*} ,$ where\\
	(C1'): Equation (\ref{eqn27}) has six positive roots,\\
	(C2'): $ 3H_3\omega^3 +2\omega H_0H_4>H_1H_3\omega, $ $ 3\omega^2H_4<H_1H_4+2H_0H_3\omega^2 $ and $ 2H_5\omega\sin\omega(\tau_1 -\tau_2)<0. $
\end{theorem}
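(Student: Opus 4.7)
The plan is to follow essentially the same three-step template used in Cases 2--4 (characteristic-equation analysis, polynomial elimination, transversality check), but now with $\tau_2$ treated as a fixed parameter inside its already-established stability interval and $\tau_1$ as the bifurcation parameter. All the algebraic preparation is in fact already laid out in the excerpt: the characteristic equation is $P(\lambda,\tau_1,\tau_2)=0$ from (\ref{eqn21}), the two real--imaginary split equations immediately before (\ref{eqn27}) are in place, and (\ref{eqn27})--(\ref{eqn28}) provide the polynomial in $\omega$ and the explicit formula for $\tau_{1,k}^{n}$. So the proposal is essentially to assemble these pieces into a proof and justify the two hypotheses (C1') and (C2').

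First, I would argue local asymptotic stability for small $\tau_1$. Since $\tau_2 \in (0,\tau_2^{*})$ is already in the stable interval established in Case~3, all roots of the Case~3 characteristic equation have negative real parts. Setting $\tau_1 = 0$ in (\ref{eqn21}) recovers that same equation, so $P_{*}$ is asymptotically stable at $\tau_1 = 0$. By continuity of the roots in the delay parameter, stability persists for $\tau_1$ in some right-neighborhood of $0$, and the only way it can be lost is if some root crosses the imaginary axis. Thus it suffices to identify the smallest $\tau_1 > 0$ at which $\lambda = i\omega$ is a root, which is exactly $\tau_{10}^{*}$.

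Next, I would justify existence of such a crossing. Substituting $\lambda = i\omega$ into (\ref{eqn21}) and separating real and imaginary parts yields the two equations displayed just before (\ref{eqn27}). Squaring and adding eliminates $\tau_1$ and produces the sixth-degree polynomial (\ref{eqn27}) in $\omega$ (with $\tau_2$ appearing as a known parameter). Hypothesis (C1') asserts this polynomial has six positive roots $\omega_k$; for each, solving the linear system for $\cos\omega\tau_1$ and $\sin\omega\tau_1$ gives the formula (\ref{eqn28}) for $\tau_{1,k}^{n}$. Taking the minimum over $k$ and $n=0$ defines $\tau_{10}^{*}$ and the associated $\omega^{*}$, so that $\pm i\omega^{*}$ is a purely imaginary pair of roots of (\ref{eqn21}) at $\tau_1 = \tau_{10}^{*}$.

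The main obstacle is the transversality condition. Differentiating (\ref{eqn21}) implicitly with respect to $\tau_1$, solving for $(d\lambda/d\tau_1)^{-1}$, and taking the real part at $\lambda = i\omega^{*}$ produces exactly the expression displayed after (\ref{eqn28}). I would then read off that its sign is controlled by the numerator
\[
\cos\omega\tau_1\bigl(3H_3\omega^3 - H_1H_3\omega + 2H_0H_4\omega\bigr) - \sin\omega\tau_1\bigl(3\omega^3 H_4 - H_1H_4 - 2H_0H_3\omega^2\bigr) + H_3 - 2H_5\omega\sin\omega(\tau_1 - \tau_2),
\]
since the denominator $\omega(H_4^2 + H_3^2\omega^2)$ is manifestly positive. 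The three inequalities in (C2') are tailored precisely so that each of the three grouped contributions has a controlled sign: the first two guarantee that the coefficients multiplying $\cos\omega\tau_1$ and $-\sin\omega\tau_1$ combine constructively, while the third ensures the cross-delay residual term $-2H_5\omega\sin\omega(\tau_1 - \tau_2)$ is non-negative. Together they yield $\operatorname{Re}(d\lambda/d\tau_1)|_{\tau_{10}^{*}} \neq 0$, so a simple pair of roots crosses the imaginary axis transversally. Finally, invoking the Hopf bifurcation theorem (as in Hassard--Kazarinoff--Wan) gives the bifurcation of periodic solutions at $\tau_1 = \tau_{10}^{*}$, completing the proof. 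The delicate point, as in Case~4, is that one cannot simplify the transversality numerator as cleanly as in Case~2 because both delays now appear inside trigonometric factors; this is exactly why (C2') is stated as three separate sign conditions rather than a single one.
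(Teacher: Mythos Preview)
Your proposal is correct and follows essentially the same approach as the paper: the paper itself does not give a detailed proof for this case but simply writes ``Following the similar procedure as in case~4'' and then records equations~(\ref{eqn27})--(\ref{eqn28}) together with the transversality sign expression, exactly the ingredients you have assembled. Your three-step template (stability at $\tau_1=0$ inherited from Case~3, existence of purely imaginary roots via (C1'), transversality via (C2')) is precisely what the paper intends.
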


\textbf{Case 6: $ \tau_1=\tau_2=\tau $}\\
In this case, characteristic equation is 
\begin{equation}\label{eqn29}
P(\lambda,\tau)=\left(\lambda^3+H_0\lambda^2+H_1\lambda+H_2\right)+\left(H_5\lambda^2+H_3\lambda+H_7\right)e^{-\lambda\tau}=0,
\end{equation}
where $ H_7 = H_4+H_6. $ Then substituting $ \lambda=\i\omega $ in (\ref{eqn29}) and following the similar procedure, we obtain
\begin{equation}\label{eqn30}
\tau_{k}^{i}=\frac{1}{\omega_k}\arccos\frac{H_3(H_0\omega^2 -H_2)+(H_7-H_5\omega^2)(\omega^2 -H_1)}{2H_3(H_7-H_5\omega^2)}+\frac{2n\pi}{\omega_k},\quad k=1,2,\cdots,6\quad n=0,1,2\cdots.
\end{equation}
and 
\begin{align}\label{eqn31}
\omega^6+(H_{0}^{2}-H_{5}^{2}-2H_1)\omega^4+(H_{1}^{2}-2H_0H_2+2H_5H_7)+(H_{2}^{2}-H_{7}^{2}) =0.
\end{align}
Furthermore, we have
{\scriptsize{\begin{align*}
		&sign\left[Re\left[\left(\frac{d\lambda}{d\tau}\right)^{-1}\Big|_{\tau = \tau^{*}}\right]\right]\\&= sign\left[\frac{(H_7 -H_5\omega^2)\left(2H_0\omega\cos\omega\tau +(H_1 -3\omega^2)\sin\omega\tau\right)-H_3\omega\left((H_1 -3\omega^2)\cos\omega\tau-2H_0\omega\sin\omega\tau\right)+2\omega H_5(H_7-H_5\omega^2)-H_{3}^{2}\omega}{\omega\left((H_5\omega^2 -H_7)^2+H_{3}^{2}\omega^2\right)}\right].
		\end{align*}}}

\begin{theorem}
	Assume conditions (C1'')-(C2'') hold and $ \tau \in (0,\tau^{*}), $ then $ P_* $ is locally asymptotically stable for $ \tau \in (0,\tau^{*}) $ and system (\ref{eqn_delay}) undergoes Hopf bifurcation to periodic solutions at $ P_* $ for $ \tau =\tau^{*} ,$ where\\
	(C1''): Equation (\ref{eqn29}) has six positive roots,\\
	(C2''): \footnotesize{$ (H_7 -H_5\omega^2)\left(2H_0\omega\cos\omega\tau +(H_1 -3\omega^2)\sin\omega\tau\right)+2\omega H_5(H_7-H_5\omega^2)>H_3\omega\left((H_1 -3\omega^2)\cos\omega\tau-2H_0\omega\sin\omega\tau\right)+H_{3}^{2}\omega. $}
\end{theorem}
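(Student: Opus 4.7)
The plan is to mirror the single-delay analyses of Case 2 and Case 3, exploiting the fact that when $\tau_1=\tau_2=\tau$ the two exponential terms in the characteristic equation collapse into a single $e^{-\lambda\tau}$ factor, so (\ref{eqn29}) has exactly the form of a quasi-polynomial with one delay. First, substitute $\lambda=i\omega$ with $\omega>0$ into (\ref{eqn29}) and separate real and imaginary parts, producing two trigonometric equations of the form
\begin{align*}
(H_7-H_5\omega^{2})\cos\omega\tau + H_3\omega\sin\omega\tau &= H_0\omega^{2}-H_2,\\
H_3\omega\cos\omega\tau - (H_7-H_5\omega^{2})\sin\omega\tau &= \omega^{3}-H_1\omega.
\end{align*}
Squaring and adding eliminates $\tau$ and, after letting $v=\omega^{2}$, gives precisely the sextic (\ref{eqn31}). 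Under assumption (C1'') this sextic has six positive roots $\omega_k$, and solving the two trigonometric equations for $\cos\omega_k\tau$ yields the family $\tau_{k}^{(n)}$ in (\ref{eqn30}); the minimum over $k$ at $n=0$ defines $\tau^{*}$ and the associated $\omega^{*}$.

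Next, I would differentiate (\ref{eqn29}) implicitly with respect to $\tau$ and solve for $(d\lambda/d\tau)^{-1}$, obtaining
\[
\left(\frac{d\lambda}{d\tau}\right)^{-1}
= \frac{3\lambda^{2}+2H_0\lambda+H_1}{\lambda(H_5\lambda^{2}+H_3\lambda+H_7)e^{-\lambda\tau}}
+ \frac{2H_5\lambda+H_3}{\lambda(H_5\lambda^{2}+H_3\lambda+H_7)} - \frac{\tau}{\lambda}.
\]
Evaluating at $\lambda=i\omega^{*}$, rationalising, and taking the real part reduces (using the two trigonometric identities from the first step to eliminate the exponential) to the bracketed sign expression displayed just before the theorem. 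Hypothesis (C2'') is precisely what makes this numerator positive, and hence
\[
\operatorname{sign}\!\left\{\operatorname{Re}\!\left(\tfrac{d\lambda}{d\tau}\right)\Big|_{\tau=\tau^{*}}\right\}>0,
\]
so a pair of simple roots crosses the imaginary axis transversally from left to right at $\tau=\tau^{*}$.

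Finally, for $\tau\in(0,\tau^{*})$ all roots of (\ref{eqn29}) have negative real parts: this follows by continuity from the case $\tau=0$, where the characteristic polynomial reduces to the cubic factor of (\ref{eqn7}) (times the three obviously stable linear factors), which is Hurwitz under (\ref{eqn6_1}), combined with the fact that $\tau^{*}$ is by construction the smallest positive delay at which a root touches the imaginary axis. Hence $P_{*}$ is locally asymptotically stable on $(0,\tau^{*})$, and the transversality plus the simplicity of the crossing roots invokes the classical Hopf bifurcation theorem (see, e.g., Hassard–Kazarinoff–Wan) to yield the bifurcation of periodic solutions at $\tau=\tau^{*}$.

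The main obstacle will be the algebraic verification that (C2'') really corresponds to the sign of the transversality expression displayed; this amounts to carefully using the two trigonometric identities from the separation of real and imaginary parts to eliminate $e^{-i\omega^{*}\tau^{*}}$ from the derivative formula, and the bookkeeping is somewhat heavier than in Cases 2–3 because the coefficient polynomial $H_5\lambda^{2}+H_3\lambda+H_7$ now has a quadratic part. Everything else is a direct adaptation of the earlier single-delay arguments.
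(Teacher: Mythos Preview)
Your proposal is correct and follows essentially the same route as the paper: the paper's own ``proof'' of this theorem is simply the derivation immediately preceding the statement (substituting $\lambda=i\omega$ into (\ref{eqn29}), obtaining (\ref{eqn30})--(\ref{eqn31}), and computing the sign of $\operatorname{Re}[(d\lambda/d\tau)^{-1}]$ at $\tau=\tau^{*}$), with the theorem serving as a summary. Your write-up supplies more detail than the paper---in particular the explicit separation into real and imaginary parts, the intermediate formula for $(d\lambda/d\tau)^{-1}$, and the continuity argument from $\tau=0$ under (\ref{eqn6_1})---but the underlying method is identical.
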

\newpage
\section{Numerical Simulations}\label{sec_numerical}

\subsection{Original Model}\label{sec: non-delay}

\noindent We begin by studying the original model (\ref{eqn1}) without the delay.

\begin{figure}[!ht]
	\begin{center}
		\subfigure[Cell Evolution with time]{%
			\label{fig_1a}
			\includegraphics[height=2in,width=3in]{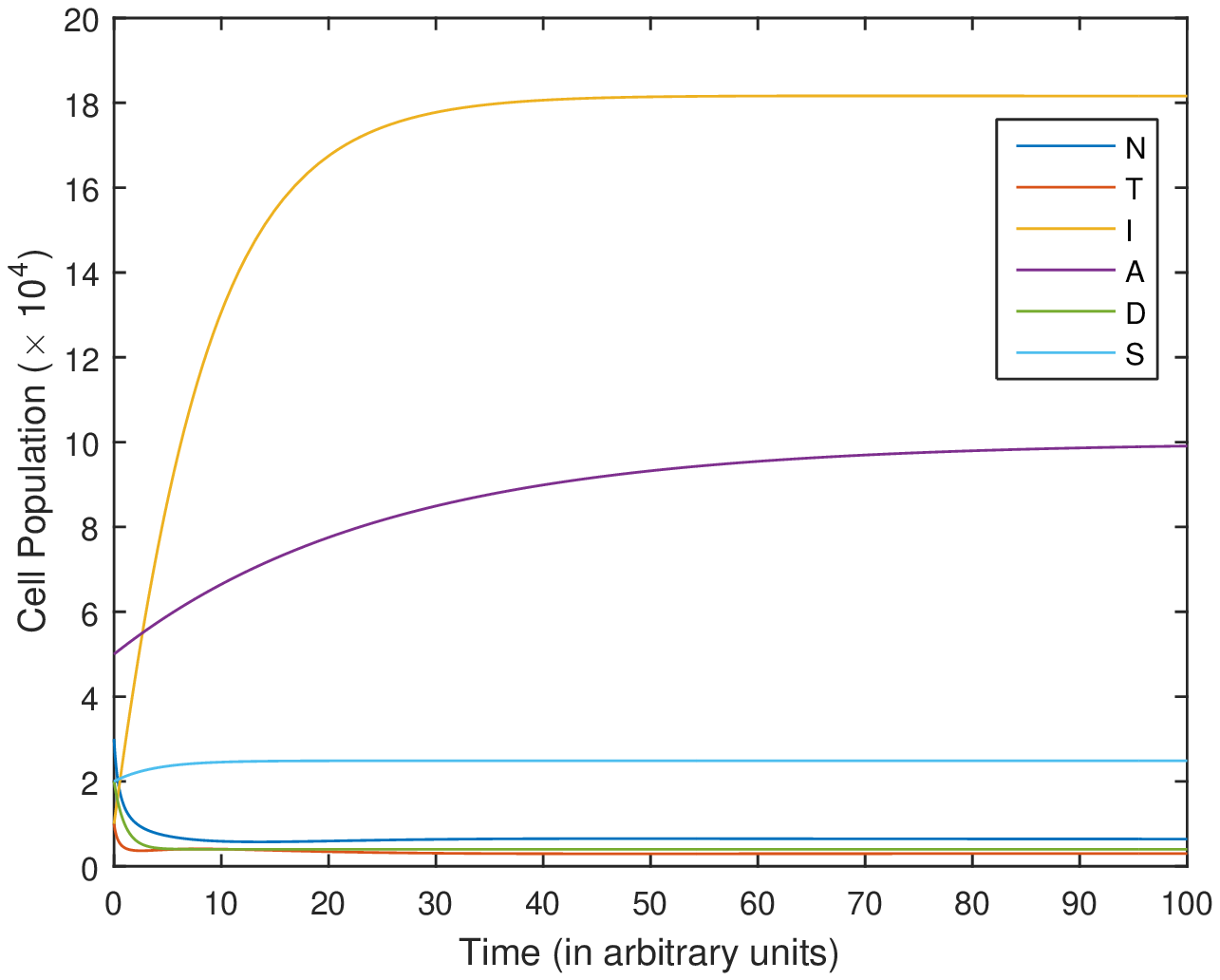}            
		}%\\
		\subfigure[Evolution of Tumor cells with various $\beta$]{%
			\label{fig_1b}
			\includegraphics[height=2in,width=3in]{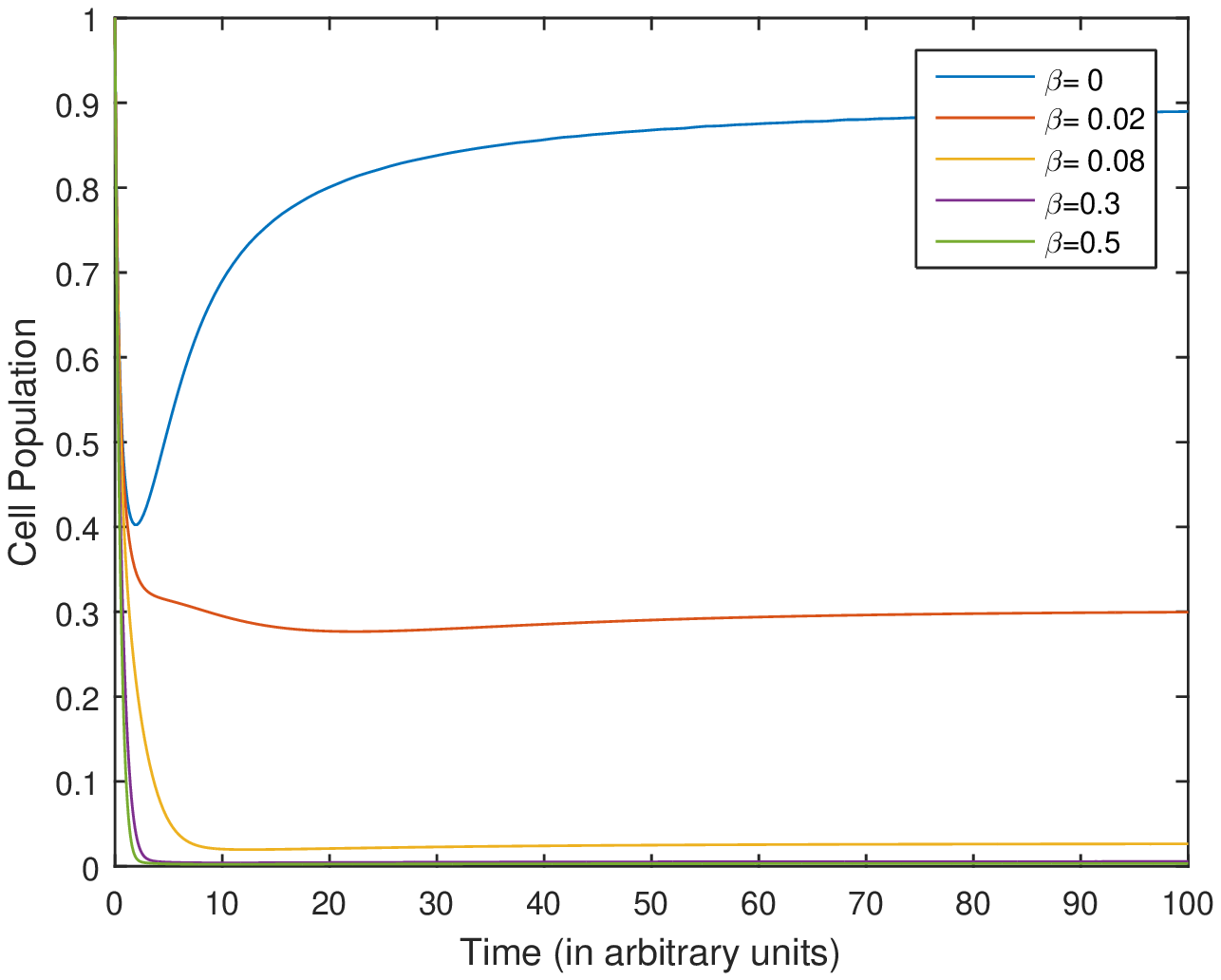}         
		}%
	\end{center}
\caption{Evolution of cells for (a) parameters given in (\ref{parm}) and (b) for various $\beta$}
\end{figure}

\begin{figure}[!ht]
	\begin{center}
		\subfigure[Evolution of Tumor cells with various $\alpha_2$]{%
			\label{fig_2a}
			\includegraphics[height=2in,width=3in]{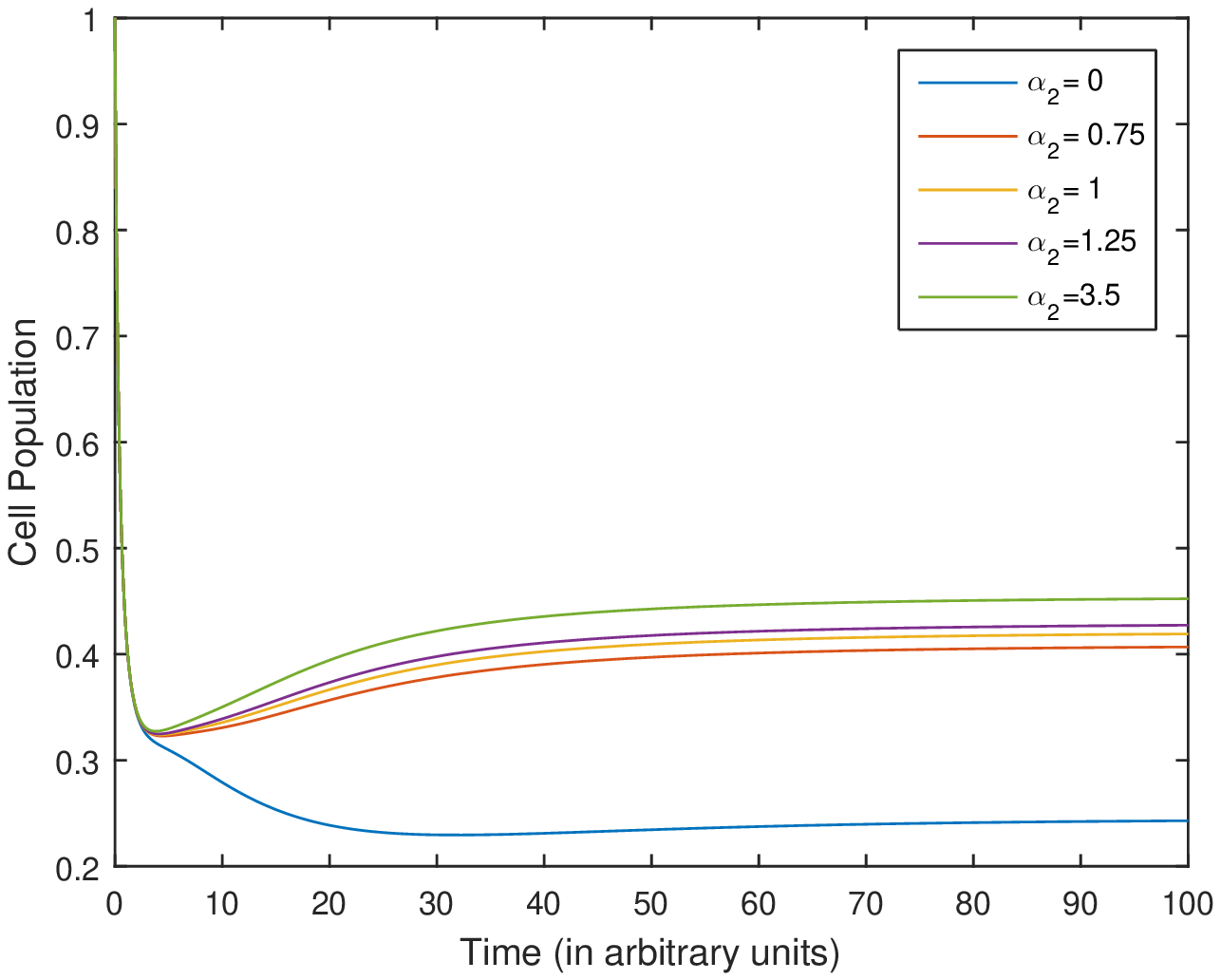}           
		}%\\
		\subfigure[$R_0$ for various $\alpha_2$ and $\delta$]{%
			\label{fig_2b}
			\includegraphics[height=2in,width=3in]{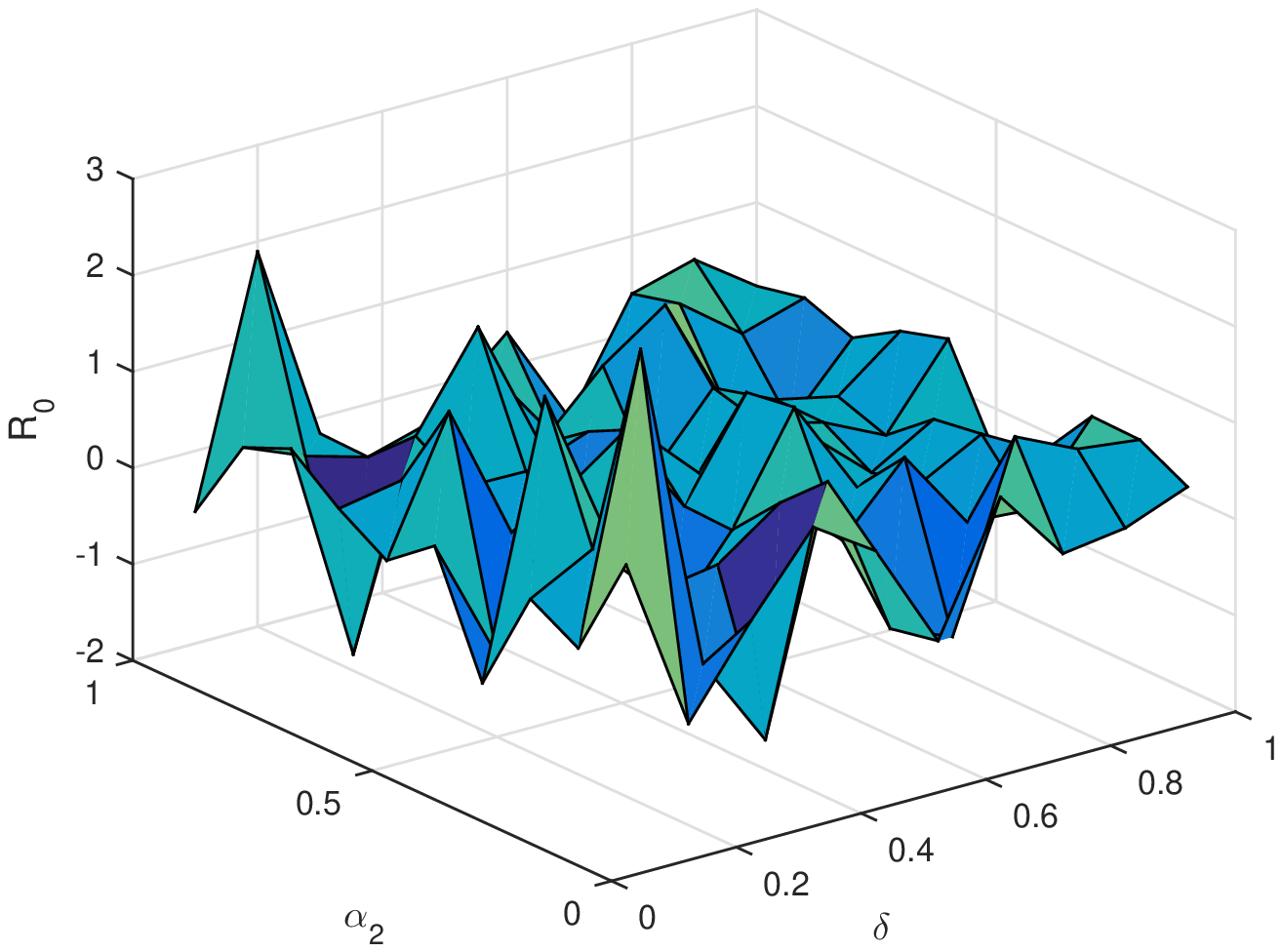}
		}%
	\end{center}
\caption{Evolution of cells for (a) various $\alpha_2$ and (b) $R_0$ for various $\alpha_2$ and $\delta$}
\end{figure}

For the numerical simulation of our model (\ref{eqn1}), we consider the parameter values as,
\begin{align}\label{parm}
& r_1 = 1, r_2 = 0.5, s_0 = 0.33, b_1 = 1, b_2 = 1, c_1 = 0.5,  c1' = 0.5, \alpha = 0.003, \alpha' = 0.003, \beta = 0.02,\\ &\beta' = 0.02, \gamma = 0.2, d_I = 0.2, d_D = 1, d_A = 0.04, 
\eta = 0.5,  \mu_1 = 0.3,  b = 0.5, \rho = 0.01, \alpha_1 = 0.3, \\&\delta = 0.08,  \alpha_2 = 0.1, b_0 =0.4, a_0 = 0.4. 
\end{align} 
Corresponding to these parameter values, the equilibrium points are $ P_1= (0,0,18.2923,10,0.4,2.4877), $ $ P_2 =(0,0.7403,8.9826,10,0.4,2.4878) $ and $P_3= (0.4170,0.5230,10.6200,10,0.4,2.4877). $
For this set of parameters, the reproduction number is $ R_0 = 0.3526. $  The simulation results for the model system (\ref{eqn1}) corresponding to these parameter values and initial population (30000, 10000, 10000, 50000, 20000, 20000) are shown in Figure \ref{fig_1a}.

It can be observed from the time portrait that the system is asymptotically stable, with the solutions converging to the equilibrium point $ P_3. $ Furthermore, we observe the evolution of Tumor cells with variations in parameters $ \beta $ and BT-immune threshold rate $ \alpha_2. $ We can observe from Figure \ref{fig_1b} that as the competition term $ \beta $ increases, population of Tumor cells decreases. This is due to competition between the Tumor cells and immune cells. 

%Corresponding to these set of parameter values, for case 2, the critical delay value for Hopf bifurcation to take place is $ \tau_{2}^{*}=6.2064. $ 

\begin{figure}[!ht]
	\begin{center}
		\subfigure[Evolution of tumor cells with various interaction terms. ]{%
			\label{T_evol}
			\includegraphics[height=2in,width=3in]{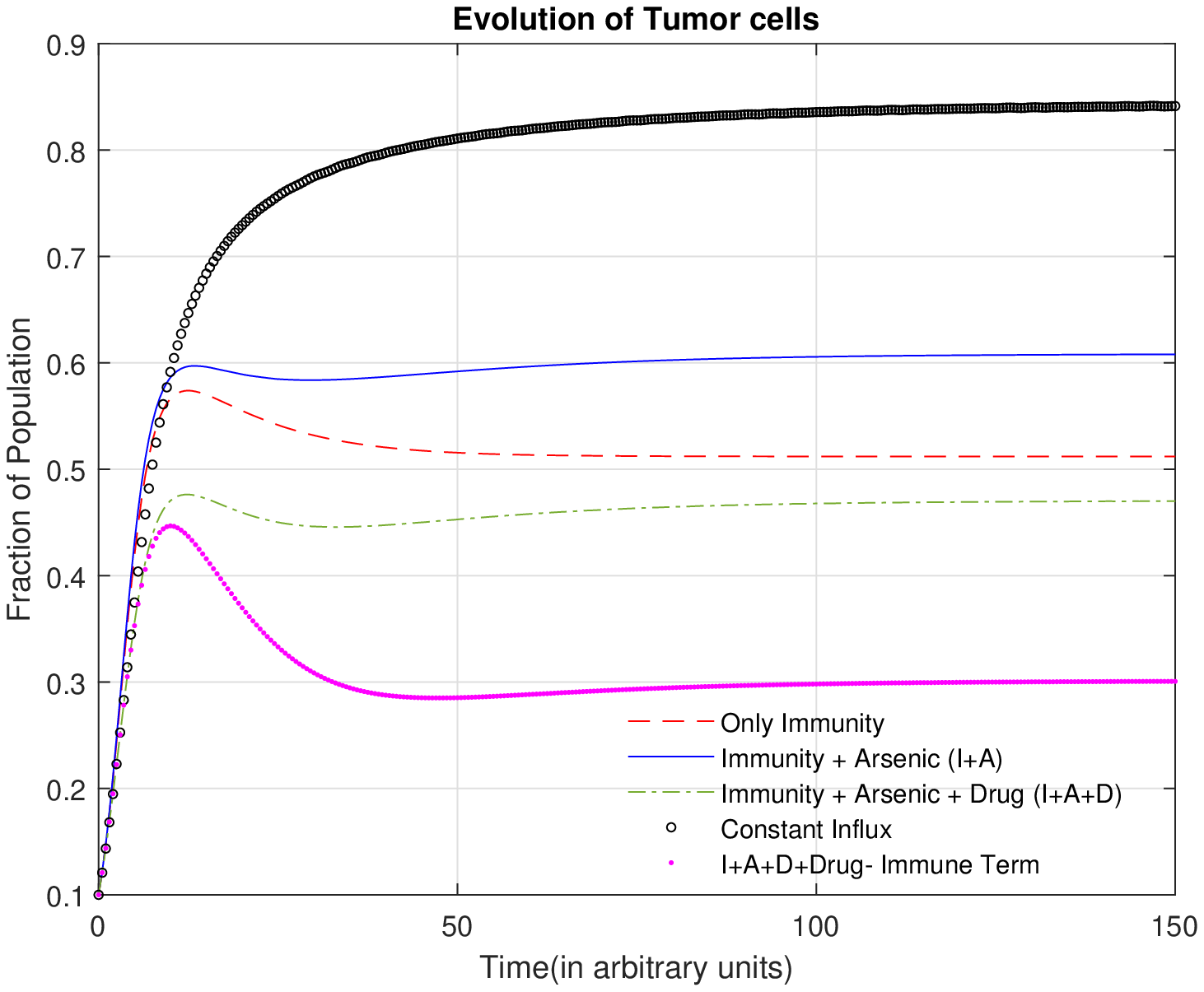}           
		}%\\
		\subfigure[Evolution of normal cells with various interaction terms.]{%
			\label{N_evol}
			\includegraphics[height=2in,width=3in]{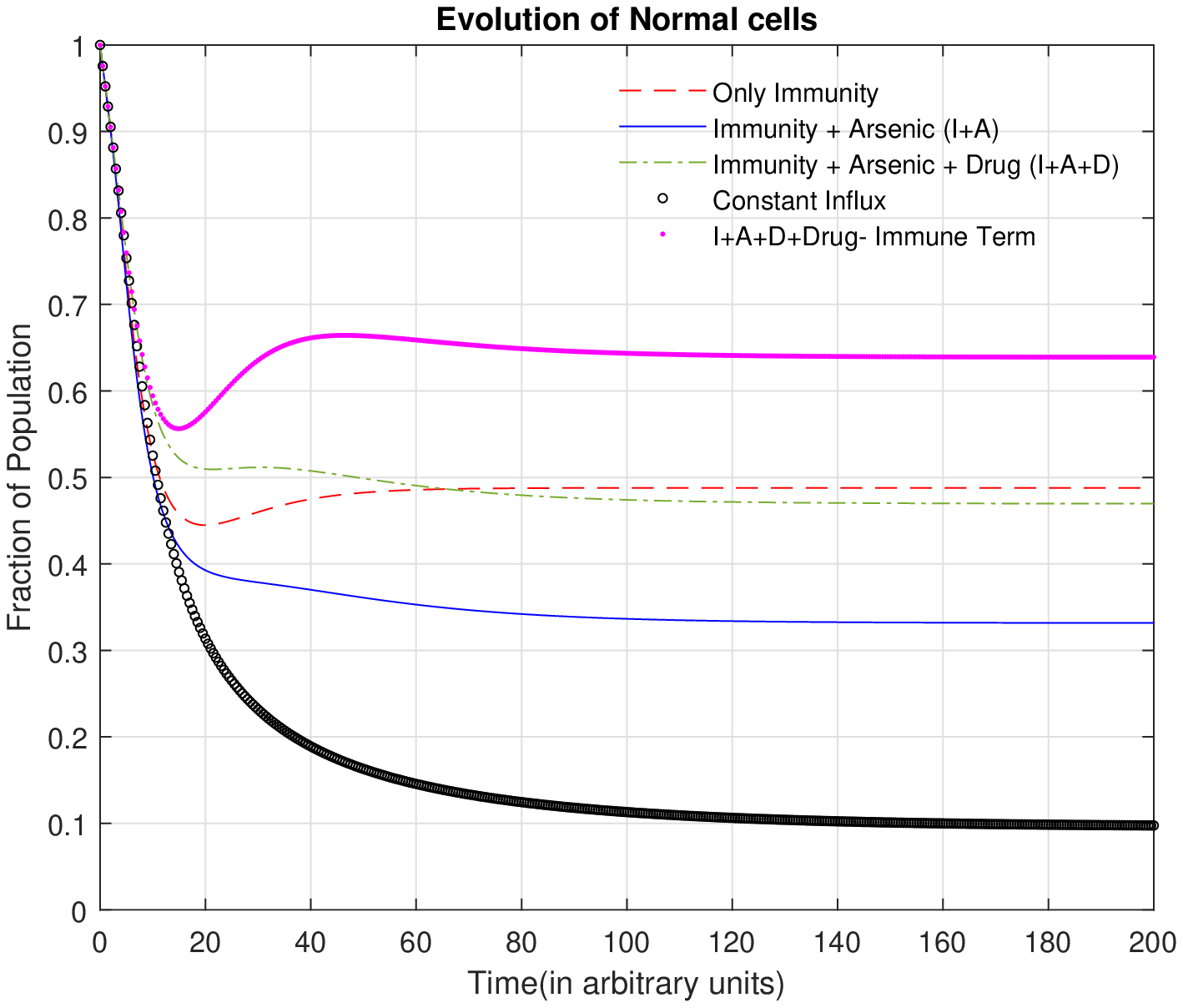}
		}%
	\end{center}
	\caption{Solution for variable-influx with BT-immune term converges to equilibrium state T*= 0.3005 for tumors and N*= 0.6399 for normal cells }
\end{figure}

\begin{figure}[!ht]
	\begin{center}
		\subfigure[Evolution of immune cells with various interaction terms]{%
			\label{I_evol}
			\includegraphics[height=2in,width=3in]{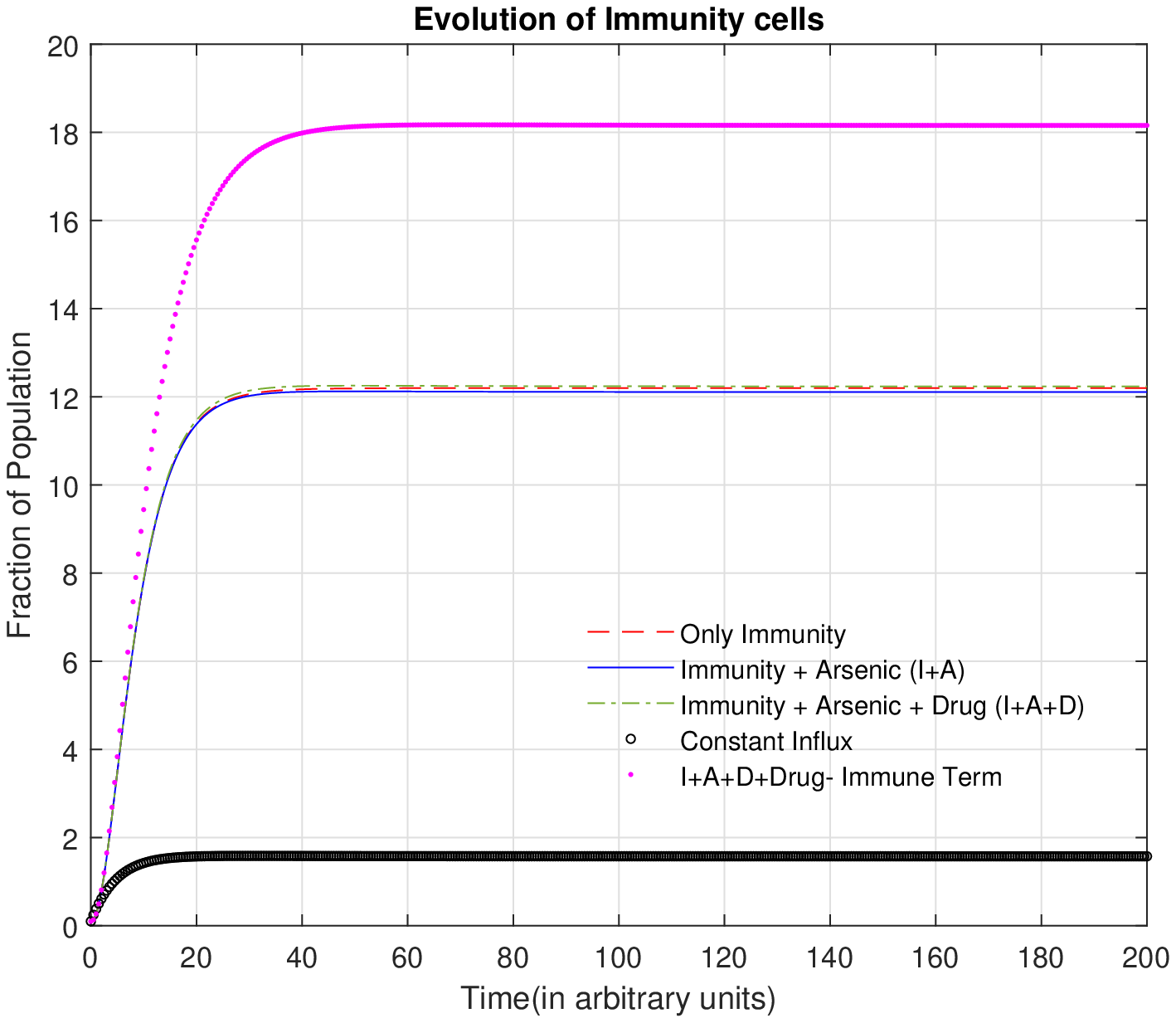}           
		}%\\
		\subfigure[Evolution of cells under steady and variable influx.]{%
			\label{Steady_vs_Var}
			\includegraphics[height=2in,width=3in]{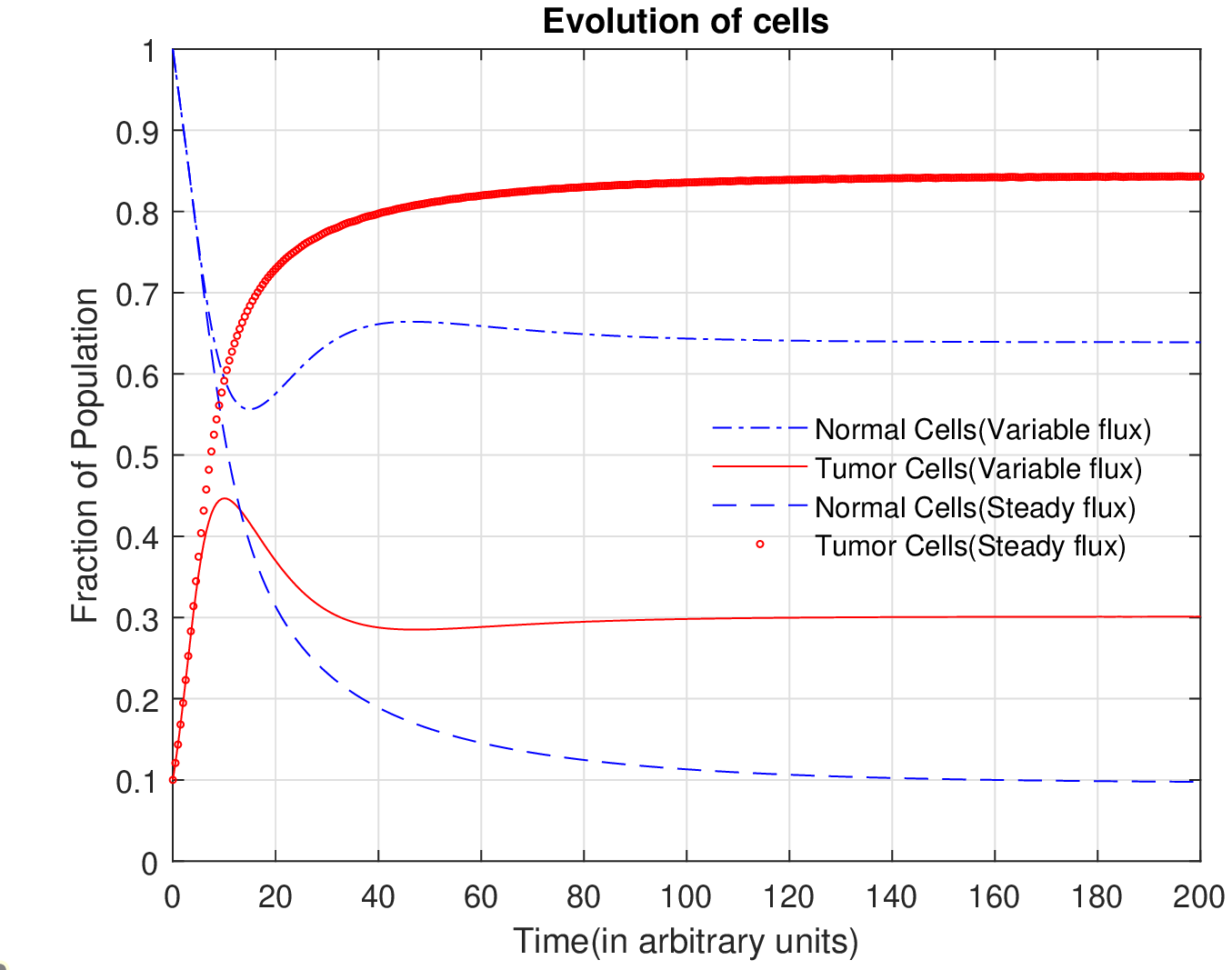}
		}%
	\end{center}
	\caption{Solution for steady immune converges to (N*, T*)= (0.0976,0.8431) and for variable influx it converges to (N*, T*)= (0.6390, 0.3010) }
\end{figure}

\begin{figure}[!ht]
	\begin{center}
		\subfigure[Evolution of tumor cells with variation in $\mu_1$]{%
			\label{N_T_var_mu1}
			\includegraphics[height=2in,width=3in]{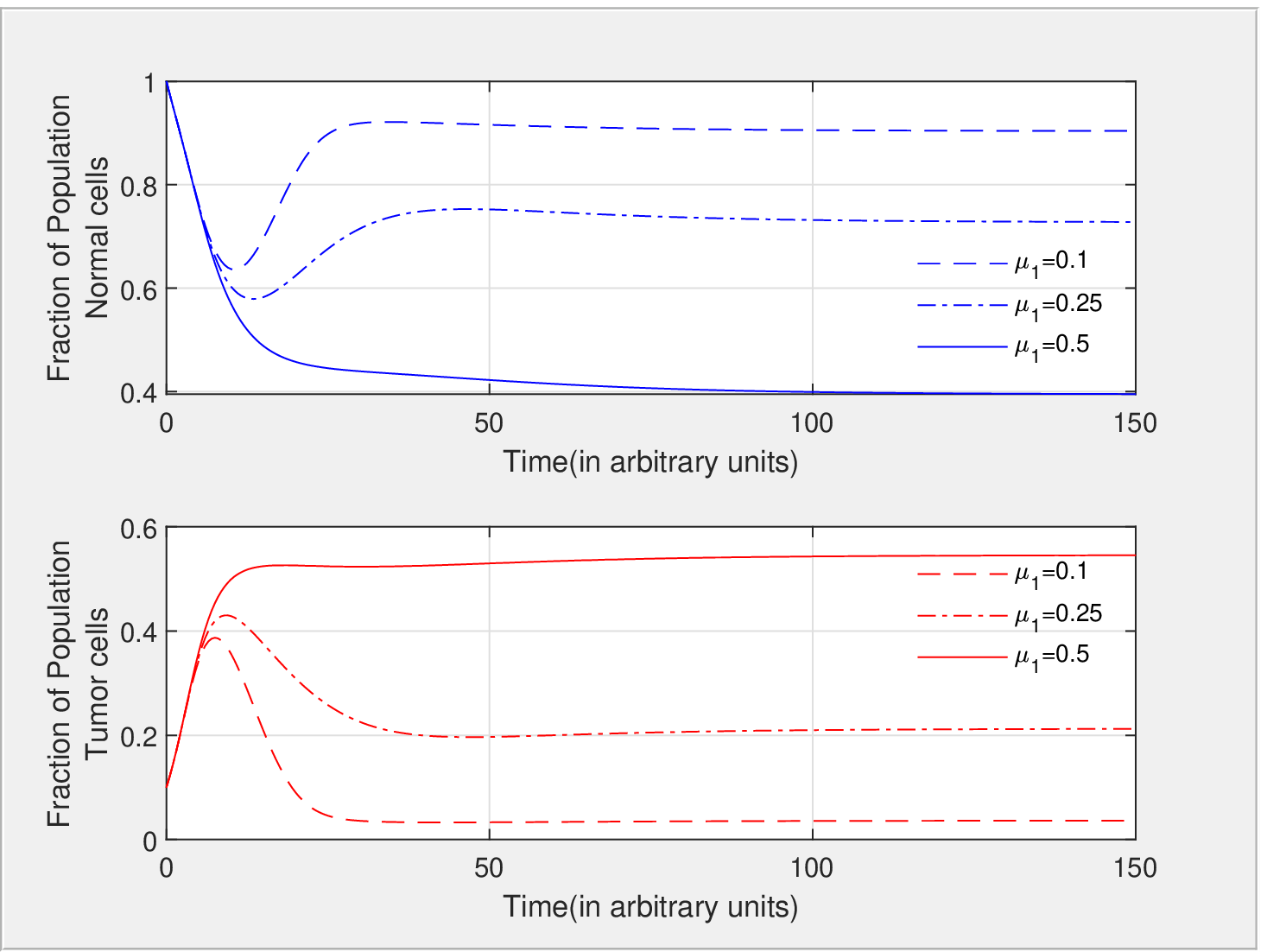}           
		}%\\
		\subfigure[Evolution of normal cells with variation in $\eta$]{%
			\label{N_T_var_eta}
			\includegraphics[height=2in,width=3in]{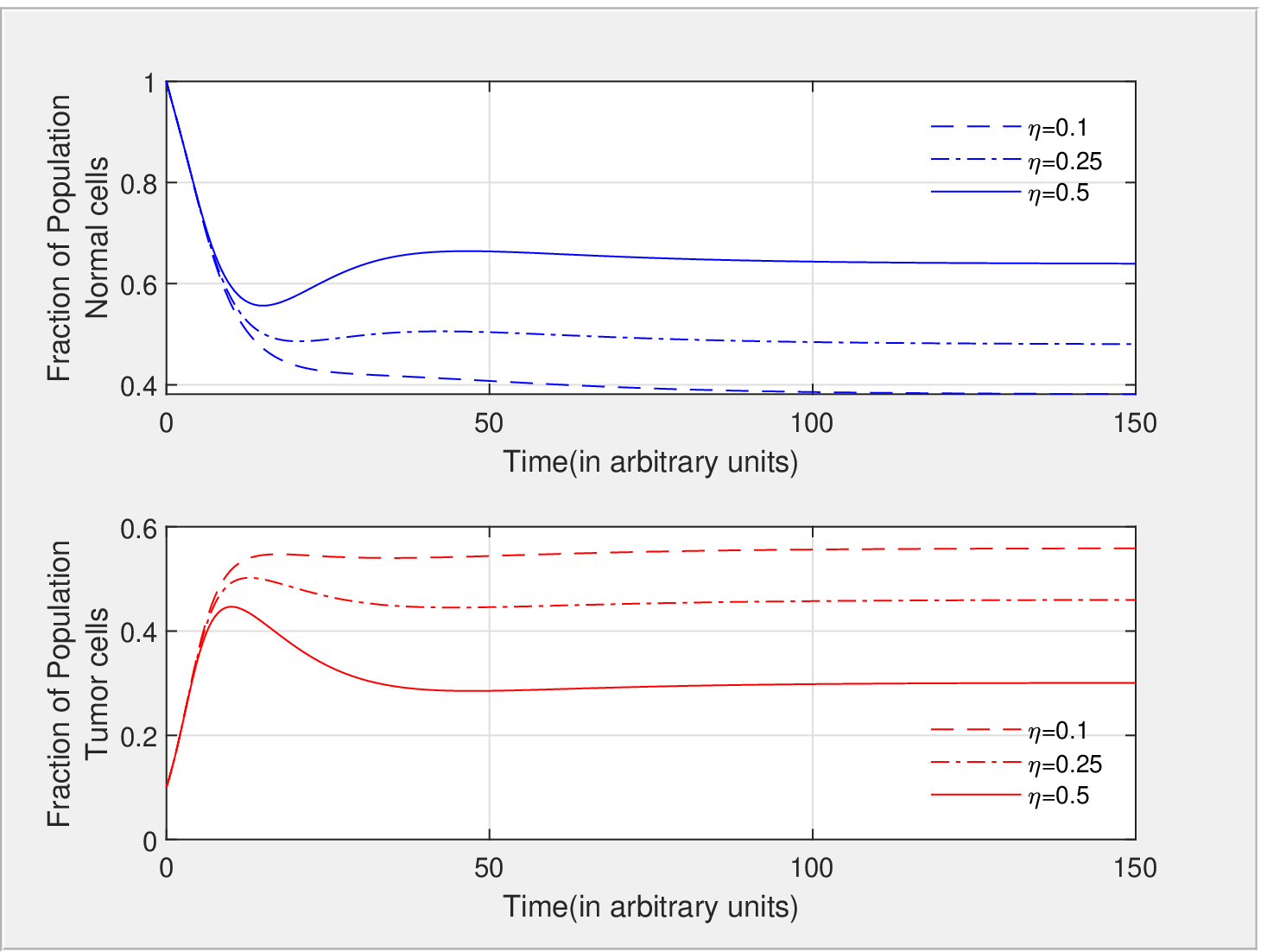}
		}%
	\end{center}
	\caption{Evolution of tumor and normal cells for various $\mu_1$ and $\eta$}
\end{figure}

\begin{figure}[!ht]
	\begin{center}
		\subfigure[]{%
			\label{phase_port_ANT}
			\includegraphics[height=3in,width=4in]{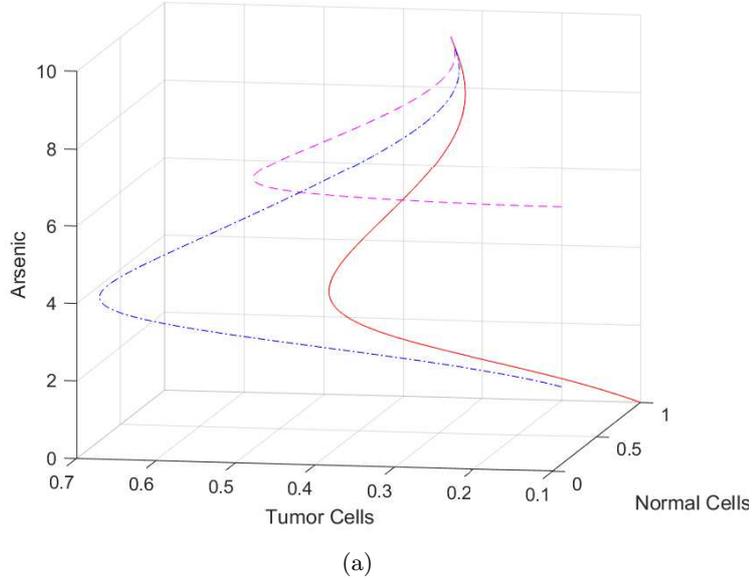}           
		}
	\end{center}
	\caption{Phase portrait for model \ref{eqn1}: Solution trajectories converge to its equilibrium state (N*, T*, A*)= (0.6399, 0.3005, 9.961) starting from different initial conditions.}
\end{figure}

The behaviours of tumor, normal and immune cells are studied using similar parameter set as before. In figure \ref{T_evol} and \ref{N_evol}, the behaviours of the tumor and normal cells are illustrated in the presence of various interaction components.

As the system of tumor cells interacts with immune cells in steady influx model (black circles in fig.\ref{T_evol} ), the saturation value is high enough to consider the system dead. The population is simulataneously low for the normal cells as shown in fig. \ref{N_evol}. This gives us the idea to supplement the immune cell growth to counteract the rise of tumor cells. A variable influx is then assumed in system with immunity alone, and the pronounced effect of variable influx is immediately observed as the tumor population comes down, while increasing the normal cells population (red dashed line in fig. \ref{T_evol} and \ref{N_evol}). Cancer inducing arsenic is then introduced in the system with the variable influx model and the tumor population is raised, and decreasing the normal population. The is due to the fact that arsenic converts a fraction of normal cells into tumorous. Black tea is introduced in the system in form drug  and while this does affect the population of tumor and normal cells (green dash-dot in the figure), the introduction of BT-immune interaction has great impact in the system as shown in fig.\ref{T_evol}(magenta dots).

For variable influx model we observe that immune cells saturate at a higher value than for the steady influx and the effect is even pronounced if BT-immune term is incorporated. This indicates that it may be possible to reach a healthy outcome with an immunomodulator and variable immune response, working in tandem, without the intervention of chemotherapeutic drugs. Indeed, it depends on the values of the parameters and the severity of the disease, but what we are able to show here is that it may be possible to dispense with chemotherapy or at least reduce its application to a large degree if suitable protocols with BT and immune response could be achieved. 

While the perfect cure for cancer is still not possible, we can aim to provide a good and prolonged life. One way to achieve that is by keeping the tumor cells always under control and lower than normal cell population. This can be achieved by extensive chemotherapy but the price is paid in form of painful side-effects. Starting with a small tumor population, and supplementing the system with immunomodulatory effects of BT, we observe as shown in fig.\ref{Steady_vs_Var}, that tumor population can be kept below a certain threshold for a longer amount of time.

The variation in the immune cells in system (\ref{eqn1})is brought about by s(t). These stimulated immune cells are assumed to have production rate of $\eta$ and death rate of $\mu_1$. Higher rate of production of immune cells has larger effect on tumor population as shown in fig.\ref{N_T_var_mu1}. One has to be careful here as large production rate can lead to immune cell proliferation and immune upon immune crowding. A similar scenerio is for death of stimulated immune cells (see fig.\ref{N_T_var_eta}) and a smaller death rate can lead to similar problems as that of larger production rate.

Considering the parameteric values in sec. \ref{sec: non-delay}, the model system(\ref{eqn1}) has the equilibrium state solution (N*,T*,A*)= (0.6399, 0.3005, 9.961), and the solution curves stabilize to its equilibrium state (shown in phase portrait \ref{phase_port_ANT} )  

\subsection{Delayed Model}

In this section we present the results obtained for a delayed system (\ref{eqn_delay}). As discussed in Section \ref{sec_Delayed}, time delay of $\tau_1$ and $\tau_2$ is incorporated in the normal and tumor cells respectively. This delay could be due immunity in the system .

A delay $\tau_1=\tau_2=\tau$ has been added in Arsenic-Normal cell interaction. This kind of delay in the system can be analysed analytically as done in Section \ref{sec_Delayed}. Physically this delay implies that the arsenic in the system converts the normal cell into tumorous at a later time. The effect of the delay doesn't have a large impact over the fraction of cells as shown in fig \ref{N_T_delay}. For a delay of $\tau=5$ the normal cells fare better very slightly before converging to its stable equilibrium point (see fig \ref{phase_port_delay})

A more physical delay in the form of delay in immune response is also studied. While the system is complex, the analytic research about the characteristic equation can be carried out similarly, and we present the numerical analysis in this section to show the dynamical behaviour of the system with delayed immune response.

\begin{figure}[!ht]
	\begin{center}
		\subfigure[Solution for model \ref{eqn_delay}. Solution converges to its equilibrium state (N*, T*)= (0.6439, 0.3005)]{%
			\label{N_T_delay}
			\includegraphics[height=2in,width=3in]{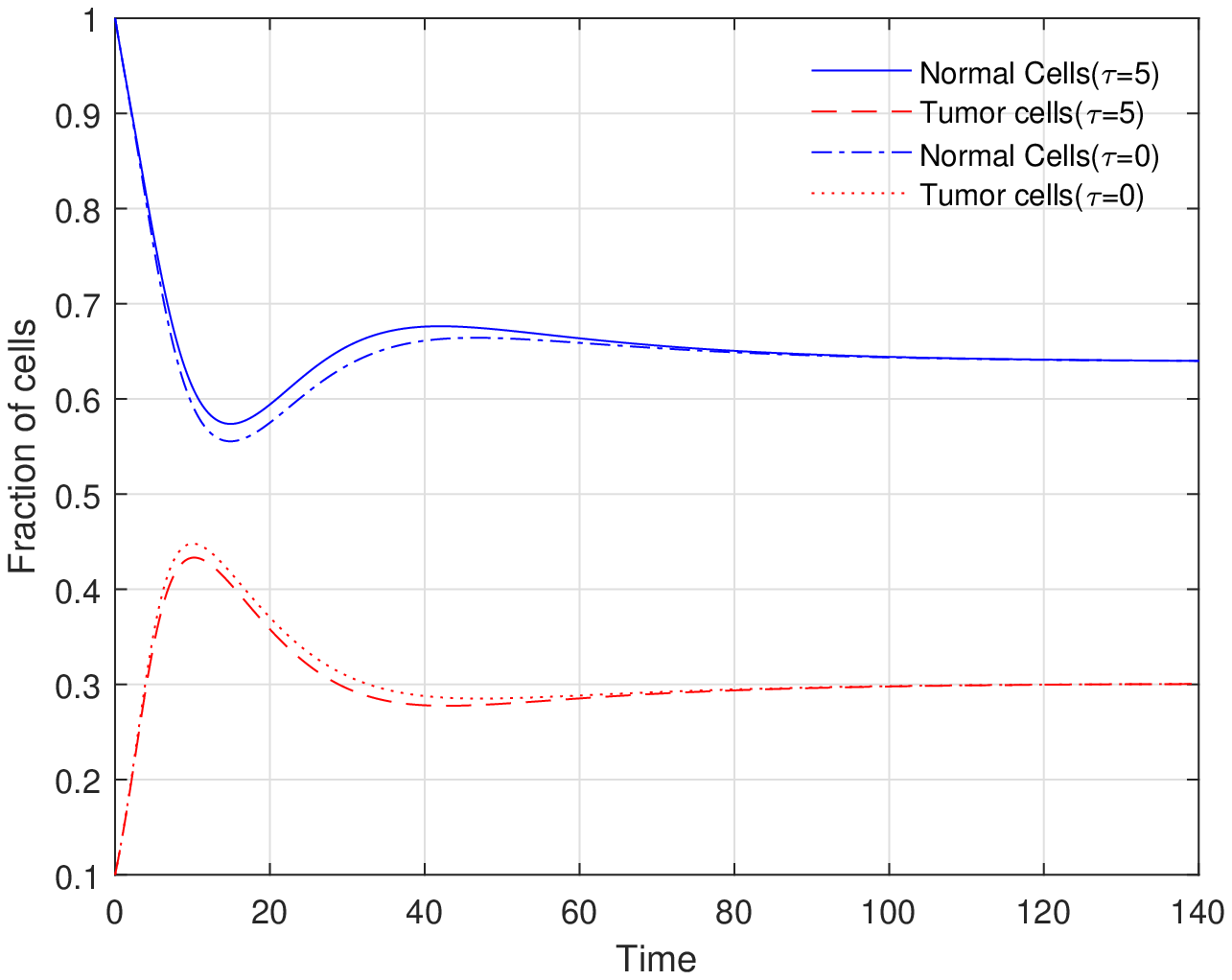}           
		}%\\
		\subfigure[Phase portrait for model \ref{eqn_delay}. Solution converges to its equilibrium state (N*, T*)= (0.6439, 0.3005)]{%
			\label{phase_port_delay}
			\includegraphics[height=2in,width=3in]{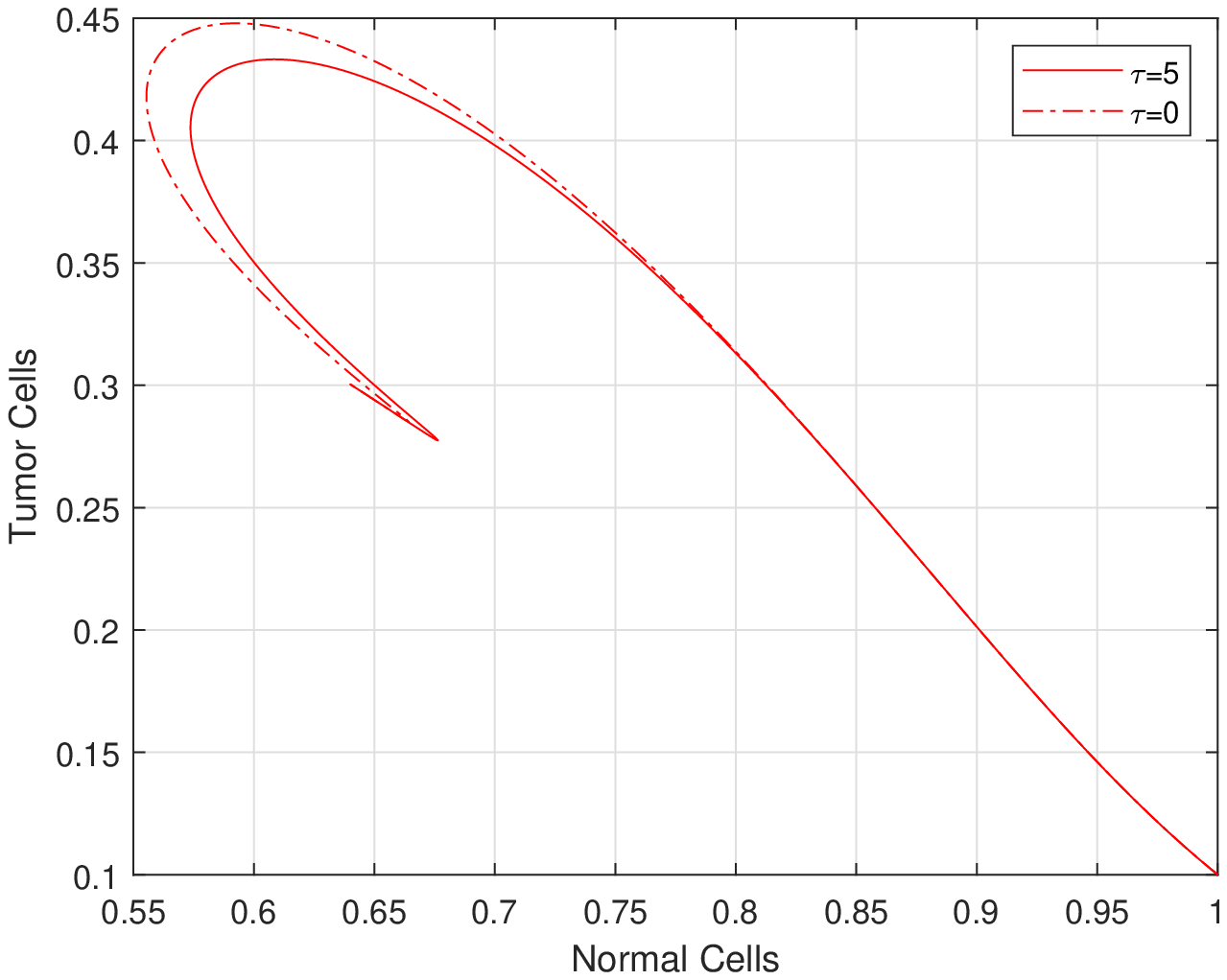}
		}%
	\end{center}
	\caption{Solution converges to its equilibrium state (N*, T*)= (0.6439, 0.3005)}
\end{figure}

\begin{figure}[!ht]
	\begin{center}
		\subfigure[Solution for model \ref{eqn_delay_immune}. Solution converges to its equilibrium state (N*, T*)= (0.6439, 0.3005)]{%
			\label{N_T_delay_immune}
			\includegraphics[height=2in,width=3in]{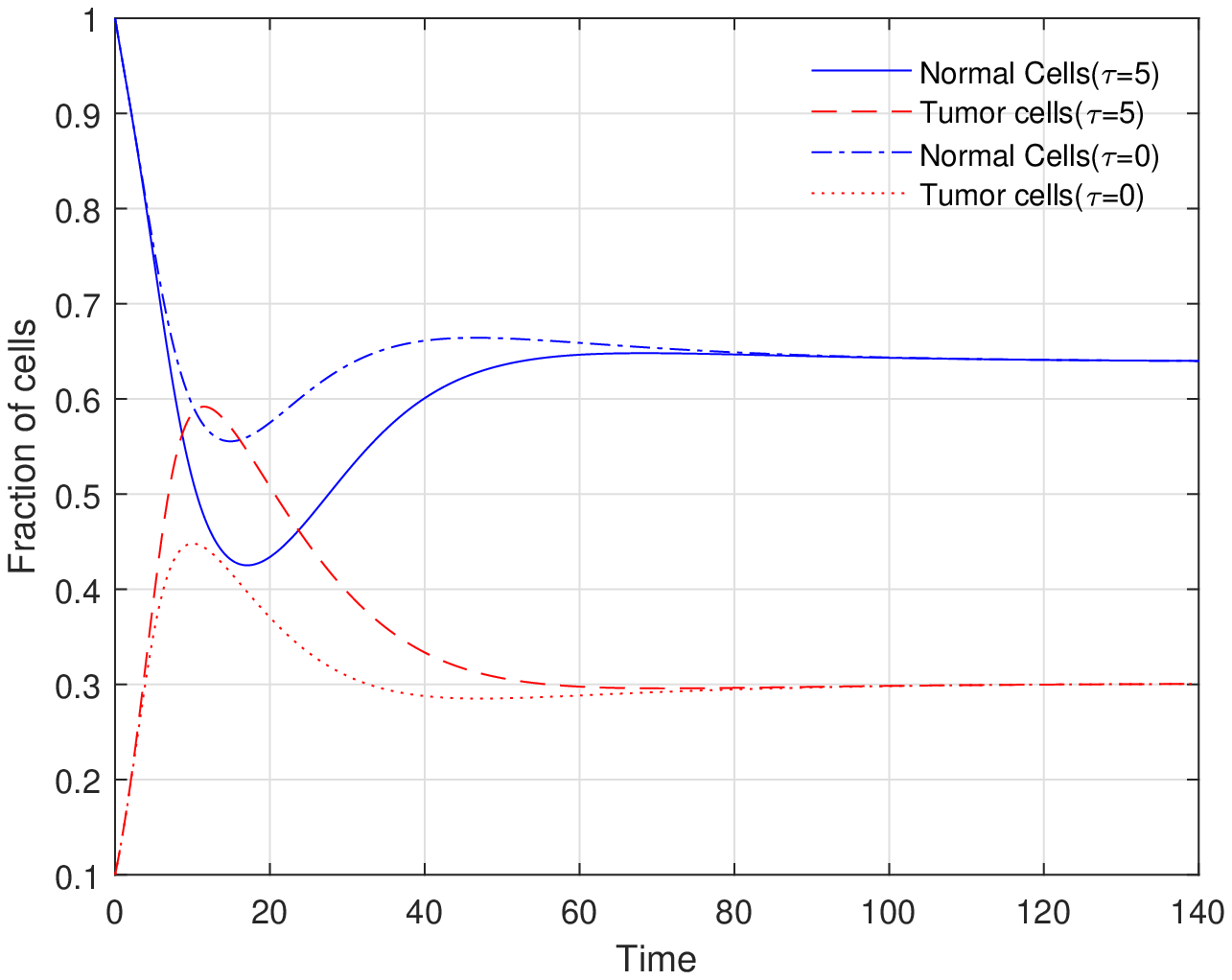}           
		}%\\
		\subfigure[Phase portrait for model \ref{eqn_delay_immune}. Solution converges to its equilibrium state (N*, T*)= (0.6439, 0.3005)]{%
			\label{phase_port_delay_immune}
			\includegraphics[height=2in,width=3in]{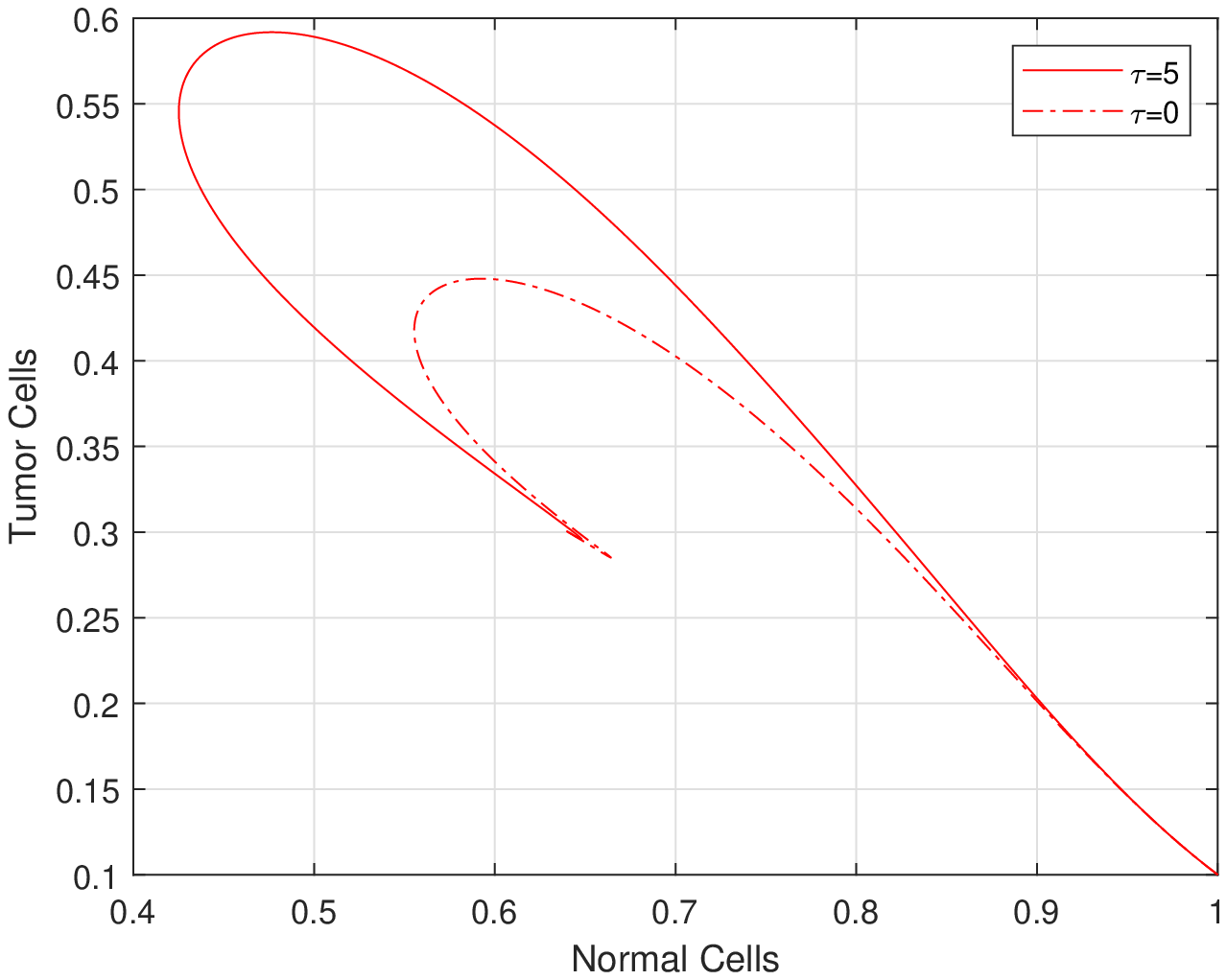}
		}%
	\end{center}
	\caption{Solution converges to its equilibrium state (N*, T*)= (0.6439, 0.3005)}
\end{figure}

\begin{eqnarray} 
\frac{dN}{dt} &=& r_{2}N\left(1-b_{2}N\right)-c_{1}TN-\alpha' AN\nonumber\\
\frac{dT}{dt} &=& r_{1}T\left(1-b_{1}T\right)-c_{1}' TN-\beta I(t-\tau)T-\gamma DT+\alpha AN\nonumber\\
\frac{dI}{dt} &=& s(t)+\frac{\rho I(t-\tau)T}{\alpha_1 +T}-d_{I}I-\beta I(t-\tau)T+\frac{\delta I(t-\tau)D}{\alpha_{2}+D}\label{eqn_delay_immune}\\
\frac{dA}{dt} &=& a_0-d_{A}A\nonumber\\
\frac{dD}{dt} &=& b_0-d_{D}D\nonumber\\
\frac{ds}{dt} &=& s_{0}+\frac{\eta}{b+s}s-\mu_{1}s.\nonumber
\end{eqnarray}

A delay in tumor-immune and  BT-immune response is studied in eqn.\ref{eqn_delay_immune}. A delay of $\tau=5$ has considerable effect on fraction of cells as shown in fig \ref{N_T_delay_immune}. The fraction of tumor cells rises above the normal cells before converging towards equilibrium point (N*, T*)= (0.6439, 0.3005), as observed in Fig. \ref{phase_port_delay_immune}.

\section{Conclusion}

We have written down a model for the growth of cancer cells under the exposure of an environmental carcinogen and a mitigating agent. We deliberately exclude any chemotherapeutic treatment in the model in order to test and establish a possible protocol without its use. The treatment of the model follows two routes, one analytical looking for the local and global stabilities of dead equilibrium and endemic equilibrium from the model.The Reproduction Number, which is an indicator of rate of spread has been computed by using next generation matrix method. The local and global stability of dead equilibrium and endemic equilibrium have been analyzed. Furthermore, stability and bifurcation analysis of delayed model have been discussed. 

A model system with compromised immune system has been studied with prime focus on using immunotherapy to obtain best results within certain general parameters. A comparison has been made between a constant influx model of immune cells studied earlier and the variable influx model of immune cells under consideration. Black tea, which has been studied extensively in literature is used as an immunomodulator to curb the side-effects of traditional chemotherapeutic treatments. Along with variable influx model an important BT-immune interaction has been incorporated in the system. Constant influx of immune cells saturates the tumor cells at unhealthy levels even under BT-immune interaction. The maximum effect on tumor cells comes from combining the two predominant factors, namely, variable influx and BT-immune interaction. Starting from as high as 10 percent tumor cell count to begin with, the tumor cell population never exceeds the normal cells at any time during the evolution. The effect of rate of production of immune cells and their death have been investigatyed and a higher production rate, and slower death rate of immune cells have greater effect on tumor cells; however this has to be exercised with caution as this may lead to immune proliferation and immune upon immune crowding.

Finally, a physically relevant scenario with delay has been introduced. A delay in arsenic-normal cell interaction some impact on the general result, albeit small quantitative shifts in counts. On the other hand, a delay in immune cell response has considerable effect on the fractional count of cells. A delay of $\tau=5$ results in higher peak value of tumor cells and even surpasses the fraction of normal cells highlighting the importance of rapid immune response in addition to the variable immune influx in fighting cancer.

\bibliographystyle{unsrt}
\bibliography{Cancer_paper}
%\begin{thebibliography}{99}
%\bibitem{cast} Castillo-Chavez, C., Feng, Z., \& Huang, W. (2002). On the computation of $R_0$ and its role on. Mathematical Approaches for Emerging and Reemerging Infectious Diseases: An Introduction, 1, 229. 
%
%\bibitem{drie} Van den Driessche, P., \& Watmough, J. (2002). Reproduction numbers and sub-threshold endemic equilibria for compartmental models of disease transmission. Mathematical Biosciences, 180(1-2), 29-48. 
%
%\bibitem{ghaffari} Ghafari, A., \& Naserifar, N. (2009). Mathematical modeling and lyapunov-based drug administration in cancer chemotherapy.
%
%\end{thebibliography}

\end{document}